\documentclass[runningheads]{llncs}



\usepackage{amsthm}
\usepackage{amsfonts}
\usepackage{mathtools}
\usepackage{mathabx}

\usepackage{comment}
\usepackage{multirow}
\usepackage{array}
\usepackage{arydshln}
\usepackage{float}
\usepackage{bm}
\usepackage{multicol}
\usepackage{appendix}

\usepackage{ifthen}
\newboolean{isFullVersion}
\setboolean{isFullVersion}{true}

\newcommand{\varA}[1]{{\operatorname{\mathit{#1}}}}






\begin{document} 

\title{A Combinatorial Characterization of Self-Stabilizing Population Protocols}

\author{Shaan Mathur \and Rafail Ostrovsky}

\institute{University of California, Los Angeles \email{\{shaan, rafail\}@cs.ucla.edu}}



\maketitle

\begin{abstract}
    We characterize self-stabilizing functions in population protocols for complete interaction graphs. In particular, we investigate self-stabilization in systems of $n$ finite state agents in which a malicious scheduler selects an arbitrary sequence of pairwise interactions under a global fairness condition. We show a necessary and sufficient condition for self-stabilization. Specifically we show that functions without certain set-theoretic conditions are impossible to compute in a self-stabilizing manner. Our main contribution is in the converse, where we construct a self-stabilizing protocol for all other functions that meet this characterization. Our positive construction uses Dickson's Lemma to develop the notion of the root set, a concept that turns out to fundamentally characterize self-stabilization in this model. We believe it may lend to characterizing self-stabilization in more general models as well.
    
    \keywords{population protocols, self-stabilization, anonymous, finite-state, chemical reaction networks}
\end{abstract}



\newtheorem*{theorem*}{Theorem}
\newtheorem*{lemma*}{Lemma}
\newtheorem*{corollary*}{Corollary}





\section{Introduction}
The population protocol computational model assumes a system of $n$ identical finite state transducers (which we call {\em agents}) in which pairwise interactions between agents induce their respective state transitions. Each agent is provided a starting input and starting state, and an adversarial scheduler decides at each time step which two agents are to interact. In order to make the behavior of the scheduler precise, the scheduler is allowed to act arbitrarily so long as the \textit{global fairness condition} (defined in the paper defining the model by Angluin, Aspnes, Diamadi, Fischer, and Peralta \cite{Model}) is satisfied: if a configuration of agent states appears infinitely often, then any configuration that can follow (say, after an interaction) must also appear infinitely often. We stress that agents individually do not have unique identifiers and a bound on $n$ in not known. We call all $n$ agents jointly a {\em population}. When an agent interacts with another agent, both agents change states as a function of each agent's input and state tuple. Each agent outputs some symbol at every time step as a function of their current state, and once every agent agrees on some common output for all subsequent time steps, we say the protocol has \textit{converged} to that symbol. We say a protocol computes (or decides) some function $f$ if distributing the input symbols of input $x$ and running the protocol causes the population (i.e. all agents) to converge to $f(x)$. In the general model an accompanying \textit{interaction graph} restricting which agents can ever interact may be provided as a constraint on the scheduler. In this paper, we will be considering the original, basic model introduced in \cite{Model}, where we deal with complete interaction graphs and inputs that do not change with time. For the population protocol model on complete graphs, the characterization of computable predicates (not necessarily with self-stabilization) has been studied by Angluin, Aspnes, Eisenstat, and Ruppert, who proved it to be equivalent to the set of semilinear predicates \cite{Semilinear}. 

It is desirable to have population protocols that can handle transient faults: specifically, we would like it to be the case that no matter what multiset of states the agents are initialized with, the protocol will eventually converge to the correct output. For this reason, such a protocol is called \textit{self-stabilizing}, being able to converge after experiencing any adversarial fault that erroneously changes an agent state. Since fault-tolerance is a desirable property of any distributed system, we aim to determine exactly what computable functions in this model admit self-stabilizing solutions, and which do not. We prove the following main theorem for population protocols on complete interaction graphs.

\begin{theorem*}[Main]\label{Self-Stabilizing Population Protocol Theorem}
    Let $f: \mathcal{X} \rightarrow Y$ be a function where $\mathcal{X}$ is any set of finite multisets on a finite alphabet. Then
        $f \text{ has a self-stabilizing protocol} \iff (\text{For any } A,B \in \mathcal{X}, A \subseteq B \implies f(A) = f(B)).$
\end{theorem*}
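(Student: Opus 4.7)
The plan is to prove both directions separately. For the forward (necessary) direction, I would first establish a structural lemma: in any self-stabilizing protocol on input $B$, the set $\mathcal{S}_B$ of configurations from which \emph{every} reachable configuration has all agents outputting $f(B)$ is nonempty, closed under transitions, and entered eventually by every fair execution. This follows from finiteness of the configuration space combined with global fairness: some configuration $c_\infty$ appears infinitely often in any fair execution, iterated fairness propagates ``infinitely often'' to every descendant of $c_\infty$, and output convergence then forces every such descendant to output $f(B)$, placing $c_\infty$ in $\mathcal{S}_B$. Now given $A \subseteq B$ both in $\mathcal{X}$, select $|A|$ agents of the $B$-population whose inputs form the multiset $A$ and pick any $c \in \mathcal{S}_B$. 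A schedule restricted to interactions among these chosen agents evolves them identically to a legitimate schedule of the $A$-system started from the restriction of $c$ to those agents. Choosing this restricted schedule to be a fair schedule of the $A$-system, self-stabilization on input $A$ guarantees a finite prefix after which the chosen agents output $f(A)$; but this prefix keeps the $B$-system inside $\mathcal{S}_B$, so those same agents simultaneously output $f(B)$, yielding $f(A) = f(B)$.

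For the backward (sufficient) direction, the plan is to construct an explicit self-stabilizing protocol. First observe that the monotonicity condition makes each preimage $f^{-1}(y)$ upward-closed within $\mathcal{X}$. Identifying finite multisets with vectors in $\mathbb{N}^k$ where $k$ is the alphabet size, Dickson's Lemma gives that the set $R$ of $\subseteq$-minimal elements of $\mathcal{X}$ is finite; call these the \emph{roots}. For any $B \in \mathcal{X}$ and any root $r \subseteq B$, the monotonicity condition gives $f(r) = f(B)$; if two roots $r_1, r_2$ both sit below $B$, both equal $f(B)$, so the value assigned to $B$ through any one of its roots is consistent. It therefore suffices to design a self-stabilizing protocol that, on input $B$, detects some root $r \in R$ with $r \subseteq B$ and outputs $f(r)$.

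The protocol would have each agent maintain, for each of the finitely many roots, a small detector tracking evidence accumulated through interactions that the root is a sub-multiset of the true input. Because inputs are immutable while state is corruptible, the detector must couple two mechanisms: an information-propagation layer (agents union evidence observed during interactions) and a refresh layer (periodic consistency checks against the immutable input symbols of interaction partners, which invalidate and reset phantom claims). Each agent's output is then a local function of its set of currently-detected roots. Correctness would argue that by global fairness the system enters an absorbing set of configurations in which every agent's detectors exactly match the true set of sub-roots of $B$, so every agent outputs $f(B)$.

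The main obstacle is engineering the detection-with-refresh mechanism so that simultaneously (i) for each root $r$ genuinely contained in the input, its detector eventually stabilizes to ``detected'' at every agent and stays there, and (ii) for each root $r$ not contained in the input, any initially corrupted ``detected'' state is eventually purged despite possibly propagating freely during early interactions. Because only states and not input symbols can be corrupted, sufficiently long periods of input observation must separate real detections from phantom ones; formalizing this distinction while keeping the per-agent state finite (and exploiting the Dickson-based finiteness of $R$ to bound the state blow-up) is the technical heart of the construction.
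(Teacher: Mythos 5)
Your necessity direction is essentially the paper's argument (after the $B$-population has converged, global fairness lets the scheduler isolate a subpopulation whose inputs form $A$, and self-stabilization forces that subpopulation to re-converge to $f(A)$ while convergence of the whole forces it to keep outputting $f(B)$), just phrased via the absorbing set $\mathcal{S}_B$; that part is correct. The gap is in the sufficiency direction. You correctly reduce to the finite root set via Dickson's Lemma, but you then defer exactly the step that carries the whole weight of the theorem: a finite-state, self-stabilizing mechanism that separates genuine evidence that a root is contained in the input from corrupted (``phantom'') evidence present in the arbitrary initial configuration. The sketch you give --- union/propagate evidence plus a ``refresh layer'' of consistency checks against partners' immutable input symbols --- does not work as stated: a corrupted counter value or corrupted ``detected'' flag is locally indistinguishable from a genuine one, because an agent can never tell from partners' inputs alone whether its evidence came from distinct agents or from a bad initialization, and with anonymous, constant-state agents and no known bound on $n$ there is no round structure to anchor a ``sufficiently long period of observation.'' Naming this as the main obstacle is accurate, but it is precisely the obstacle the proof has to overcome, so the proposal is not yet a proof of the backward implication.

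The paper's construction resolves it by inverting the detection task. Each agent keeps a pointer into the finite root set, outputs $f$ of the currently guessed root $R_i$, and gathers evidence only that its current guess is \emph{wrong}: for each root $R_j$ with $f(R_j) \neq f(R_i)$, a modular counting subprotocol tries to witness that every symbol occurring more often in $R_j$ than in $R_i$ appears in the population with multiplicity at least that in $R_j$; if $R_i$ really were contained in the input this would force $R_j$ to be contained too, contradicting $f(R_i)\neq f(R_j)$, so such evidence soundly justifies advancing the pointer. Crucially, every advance resets the agent's counter and evidence table, so phantom evidence can cause at most harmless extra advances around the (finite, cyclic) root set; once an agent has reset at least once, its counts are genuine lower bounds on symbol multiplicities (a separately proved and somewhat delicate lemma), after which a correct guess can never be falsely indicted and the agent converges. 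Finally, there is a case your plan does not anticipate at all: agents whose arbitrary initial state happens to give the right output and which never reset; the paper handles these with a dedicated fairness argument (if some other agent kept cycling forever, the never-resetting agent could have participated in the same indicting interactions, a contradiction). Without this ``detect wrongness, not presence'' inversion, the reset-on-advance discipline, the post-reset soundness lemma, and the never-reset case, the sufficiency direction remains unproved.
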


\medskip

We remark on the definition of $\mathcal{X}$: Firstly, it is a set of multisets; thus $A \subseteq B$ refers to multiset inclusion, not set inclusion (e.g. $\{a, a, b\} \subseteq \{a, a, b, b \}$ but $\{a, a, b\} \not \subseteq \{a, b\}).$ Moreover, the domain $\mathcal{X}$ can be \textit{any} set of multisets; that is, the domain does not necessarily contain all possible nonempty finite multisets on a finite alphabet, but could be a subset of it. If $\mathcal{X}$ is all possible nonempty finite multisets on a finite alphabet, then the theorem states $f$ is a constant function: the output of $f$ on the singleton multisets is the output of $f$ on the union of all singletons; since $f$ agrees on all singletons, it agrees on all larger multisets. Thus self-stabilizing decision problems, where all possible inputs are included, will be a constant function. In contrast, self-stabilizing promise problems, where only a subset of all possible inputs are included, may be non-constant. 

\noindent {\bf Our Techniques:} The technique that we use to show when self-stabilization is not possible follows from the work of Angluin, Aspnes, Fischer, and Jiang in \cite{Angluin}. Informally, self-stabilizing functions must not allow subpopulations to ``re-converge'' to a different answer; so if $A \subseteq B$ but $f(A) \neq f(B)$, running a protocol with input $B$ from any configuration could lead to the subpopulation with input $A$ converging on erroneous output $f(A)$. The converse, that functions where subsets lead to the same output are self-stabilizing, is an unstudied problem that involves a technically intricate construction. Tools from partial order theory (Dickson's Lemma) are used to observe that the domain $\mathcal{X}$ will have a finite set of minimal elements under the $\subseteq$ partial order; these minimal elements completely determine the output of $f$, and so our protocol computes $f$ by identifying which of these minimal elements is present in the population. 
A function is {\em self-stabilizing} if it admits a self-stabilizing protocol under the basic model. A nontrivial corollary of our main theorem is that for a fixed (possibly infinite) domain $\mathcal{X}$ and a finite output alphabet $Y$, there are only finitely many self-stabilizing functions $f: \mathcal{X} \rightarrow Y$.

\begin{corollary*}[Self-Stabilizing Functions are Rare]
    Fix some set, $\mathcal{X}$, of finite multisets on a finite alphabet. Fix a finite output alphabet $Y$. There are only finitely many self-stabilizing functions of the form $f: \mathcal{X} \rightarrow Y$.
\end{corollary*}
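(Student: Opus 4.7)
The plan is to use the main theorem to reduce the problem to counting functions satisfying the subset-monotonicity condition, and then to use Dickson's Lemma to show that any such function is determined by its values on a finite set. Concretely, by the main theorem, any self-stabilizing $f: \mathcal{X} \to Y$ satisfies $A \subseteq B \implies f(A) = f(B)$ for all $A, B \in \mathcal{X}$, so it suffices to bound the number of such functions.

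First, I would invoke Dickson's Lemma (exactly as the paper indicates it is used in the positive construction): the poset of finite multisets on a finite alphabet, ordered by multiset inclusion $\subseteq$, is a well-quasi-order. Consequently, the set $\mathcal{X}$ has only finitely many minimal elements under $\subseteq$; call this finite set $M = \{m_1, \dots, m_k\} \subseteq \mathcal{X}$, and by well-foundedness every $A \in \mathcal{X}$ satisfies $m_i \subseteq A$ for some $i$. Then, applying the subset-monotonicity condition guaranteed by the main theorem, we get $f(A) = f(m_i)$. Hence any self-stabilizing $f$ is completely determined by the restriction $f|_M : M \to Y$.

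The number of functions from the finite set $M$ to the finite set $Y$ is at most $|Y|^{k}$, which is finite. Therefore there are only finitely many self-stabilizing functions of the form $f:\mathcal{X}\to Y$, as claimed. The only subtlety worth flagging, but not a real obstacle, is that not every function $M \to Y$ need extend to a valid self-stabilizing $f$ on all of $\mathcal{X}$: when two distinct minimal elements $m_i, m_j$ both lie below some common $A \in \mathcal{X}$, we must have $f(m_i) = f(m_j)$. This only reduces the count further, so the finite upper bound $|Y|^k$ remains valid, which is all the corollary requires.
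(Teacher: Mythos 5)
Your proposal is correct and follows essentially the same route as the paper: invoke the main theorem to reduce to subset-monotone functions, then use Dickson's Lemma to obtain the finite set of minimal elements (the paper's minimally sized root set $\mathcal{R}$), on which any such $f$ is determined, giving the bound $|Y|^{|\mathcal{R}|}$. Your closing remark that not every assignment on the minimal elements need extend matches the paper's own discussion of when the bound is attained.
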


This follows from the fact that the outputs of finitely many minimal elements in the domain fully characterize a self-stabilizing function. This validates the intuition that the set of all self-stabilizing functions is very limited in comparison to the set of computable functions. We note that our general
self-stabilizing protocol is not efficient, and often specific problems have much faster self-stabilizing protocols. Unsurprisingly the functions we list in Section \ref{ss:examples} admit faster solutions than our protocol. \ifthenelse{\boolean{isFullVersion}}{The proceedings version of this work appears in SSS 2020 \cite{proceedings}.}{}

\subsection{Related Work}

The population protocol model was first introduced by Angluin, Aspnes, Diamadi, Fischer, and Peralta in \cite{Model} to represent a system of mobile finite-state sensors. Dijkstra was the first to formalize the notion of self-stabilization within a distributed system, although the models and problems he discussed imposed different constraints than that of population protocols, such as distinguishing agents with unique identifiers \cite{Dijkstra}. Self-stabilizing population protocols were first formalized in the work of Angluin, Aspnes, Fischer, and Jiang \cite{Angluin}. Their work generated self-stabilizing, constant-space protocols for problems including round-robin token circulation, leader election in rings, and 2-hop coloring in degree-bounded graphs. Moreover their work established a crucial method of impossibility result. Call a class of graphs \textit{simple}, if there does not exist a graph in the class which contains two disjoint subgraphs that are also in the class. Example of this include the class of all rings or the class of all connected degree-d regular graphs. Angluin et. al's work demonstrated that leader election in \textit{non-simple} classes of graphs are impossible. Our paper's impossibility result follows from Angluin et. al's technique, as the class of complete graphs that we work with is non-simple. Other impossibility results come from Cai, Izumi, and Wada \cite{Cai} using \textit{closed sets}, which are sets of states such that a transition on any two of the states results in a state within the set; impossibility in leader election is demonstrated by identifying a closed set excluding the leader state. 

Many self-stabilizing population protocol constructions besides those from \cite{Angluin} tend to give the model additional properties to achieve self-stabilization. Beauquier, Burman, Clement, and Kutten introduce intercommunication speeds amongst agents, captured by the \textit{cover time}; they also add a distinguished, non-mobile agent with unlimited resources called a \textit{base station} \cite{Beauquier2009}. Under this model, Beauquier, Burman, and Kutten design an \textit{automatic tranformer} that takes a population protocol algorithm solving some static problem and transforms it into a self-stabilizing algorithm \cite{Beauquier2009_2}. Izumi, Kinpara, Izumi, and Wada also use this model to create efficient protocols for performing a self-stabilizing count of the number of agents in the network, where there is a known upper bound $P$ on the number of sensors. Their protocol converges under global fairness with $3 \cdot \lceil \frac{P}{2} \rceil$ agent states \cite{Izumi2014}. Fischer and Jiang introduce an \textit{eventual leader detector} oracle into the model that allows self-stabilizing leader election in complete graphs and rings. The complete graph protocol works with local and global fairness conditions, while the ring protocol requires global fairness \cite{Fischer2006}. Beauquier, Blanchard, and Burman extend this work by presenting self-stabilizing leader election in arbitrary graphs when a composition of eventual leader detectors is introduced into the model \cite{Beauquier2013}. Knowledge of the number of agents allows Burman, Doty, Nowak, Severson, and Xu to develop several efficient self-stabilizing protocols for leader election; with no silence or space constraints they achieve optimal expected parallel time of $\mathcal{O}(\log n)$ \cite{burman2020efficient}. Loosely-stabilizing protocols relax self-stabilization to allow more tractable solutions, such as leader election protocols with polylogarithmic convergence time by Sudo, Ooshita, Kakugawa, Masuzawa, Datta, and Larmore \cite{sudo_et_al:LIPIcs:2018:10090}. Self-stabilizing Leader election (under additional symmetry-breaking assumptions) is possible without unique identifiers, as discussed in \cite{Awerbuch,Mayer}, but the communication happens over a fixed graph - unlike population protocols where interaction of agents is arbitrary and interaction pattern is controlled by an adversarial scheduler.
Population self-stabilizing protocols are also related to biological systems self-stabilization, see \cite{G18} for further discussion.

In this work, we do not extend the basic model with any extra abilities. We demonstrate a universal self-stabilizing population protocol for any function $f: \mathcal{X} \rightarrow Y$ where for any $A, B \in \mathcal{X}$, $A \subseteq B$ implies that $f(A) = f(B)$. We do this by using a result from partial order theory by Dickson \cite{Dickson} that states that any set of finite dimensional vectors of natural numbers have finitely many minimal elements under the pointwise partial order. 


\subsection{The Number of Self-Stabilizing Functions Depends on the Number of Minimal Elements}

In Definition \ref{d:rootset} we define the \textit{root set}. Formally, let $\mathcal{X}$ be a (possibly infinite) set of finite multisets over a finite alphabet (e.g. $\mathcal{X} = \{\{a\}, \{a, a\}, \hdots\}$ over alphabet $\Sigma = \{a\}$). The root set is some subset $\mathcal{R} \subseteq \mathcal{X}$ with the following property: for any element of the domain $A \in \mathcal{X}$, there is some element $R \in \mathcal{R}$ such that $R \subseteq A$. We call $R$ a root. Section \ref{SS:RootSet} uses Dickson's Lemma to prove that there exists a unique, finite, and minimally sized root set $\mathcal{R}$. We note the following corollary to the main theorem: to determine the output of any $A \in \mathcal{X}$ for a self-stabilizing function $f: \mathcal{X} \rightarrow Y$, it suffices to identify some root $R \subseteq A$ since $f(R) = f(A)$. In fact the entire output of $f$ is determined by $f(R)$ for each $R \in \mathcal{R}$. The number of possible outputs is then upper bounded by the size of the smallest root set, which is an interesting fact in of its own right.

\begin{corollary*}
    Let $\mathcal{X}$ be a set of finite multisets over a finite alphabet, let $Y$ be a finite output alphabet, and let $\mathcal{R}$ be the minimally sized root set of $\mathcal{X}$. Then $|\mbox{im}(f)| \leq |\mathcal{R}|$.
\end{corollary*}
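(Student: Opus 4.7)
The plan is to combine the defining property of the root set with the necessary condition from the Main Theorem (the monotonicity implication $A \subseteq B \Rightarrow f(A) = f(B)$) to show that the values $f(R)$ for $R \in \mathcal{R}$ already exhaust the image of $f$. The corollary is implicitly about self-stabilizing $f$, so I would make that hypothesis explicit at the top of the proof and then invoke the forward direction of the Main Theorem without further comment.

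The steps I would carry out, in order, are as follows. First, I would fix an arbitrary $A \in \mathcal{X}$ and apply the defining property of the root set established in Section~\ref{SS:RootSet}: there exists some $R \in \mathcal{R}$ with $R \subseteq A$. Second, since $f$ is self-stabilizing, the Main Theorem yields $f(R) = f(A)$. Third, I would conclude that every element of $\mathrm{im}(f)$ is of the form $f(R)$ for some $R \in \mathcal{R}$, i.e.\ $\mathrm{im}(f) \subseteq f(\mathcal{R})$, and hence
\[
|\mathrm{im}(f)| \;\leq\; |f(\mathcal{R})| \;\leq\; |\mathcal{R}|,
\]
where the last inequality is simply the pigeonhole bound on the image size of any function. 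Finiteness of $|\mathcal{R}|$ (and so of this bound) was already guaranteed by the Dickson's-Lemma argument used to build $\mathcal{R}$.

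There is no real obstacle here; the only thing worth being careful about is emphasizing that the inequality can be strict, since distinct roots $R, R' \in \mathcal{R}$ may very well satisfy $f(R) = f(R')$ (the root set is minimal as a \emph{covering} set of lower bounds, not as a set that separates the fibers of $f$). I would close with a one-line remark that, together with the earlier corollary bounding the number of self-stabilizing functions, this gives a quantitative handle: once the root set of $\mathcal{X}$ is known, both the image size of any self-stabilizing $f$ and the total count of such $f$ are controlled by $|\mathcal{R}|$ and $|Y|$.
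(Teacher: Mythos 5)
Your proposal is correct and follows essentially the same route as the paper: the paper also argues that for any $A \in \mathcal{X}$ there is a root $R \subseteq A$ with $f(R) = f(A)$ by the Main Theorem's necessary condition, so the image of $f$ is contained in $\{f(R) \mid R \in \mathcal{R}\}$ and hence has size at most $|\mathcal{R}|$. Your explicit note that the corollary implicitly assumes $f$ is self-stabilizing matches the paper's intended reading.
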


Fix some domain $\mathcal{X}$ and finite output alphabet $Y$, and let $\mathcal{R}$ be the minimally sized root set of $\mathcal{X}$. The number of self-stabilizing functions $f: \mathcal{X} \rightarrow Y$ are precisely the number of ways to assign an output to each root in $\mathcal{R}$, of which there are at most $N = |Y|^{|\mathcal{R}|}$ ways. In fact, the number of functions is exactly $N$ if and only if every $A \in \mathcal{X}$ has a unique root $R \subseteq A$. This is true because if two roots $R$ and $R'$ are both subsets of $A$, then they must be assigned the same output $f(R) = f(R') = f(A)$; otherwise if every $A$ has a unique root, then there is no overlap and each root has $|Y|$ choices for its output. \ifthenelse{\boolean{isFullVersion}}{See Appendix \ref{a:numfun} for brief discussion on the gap between the number of self-stabilizing functions and computable functions.}{}

\begin{corollary*}
    Let $\mathcal{X}$ be a set of finite multisets over a finite alphabet, let $Y$ be a finite output alphabet, and let $\mathcal{R}$ be the minimally sized root set of $\mathcal{X}$. The number of self-stabilizing functions $f: \mathcal{X} \rightarrow Y$ is finite. Specifically there are at most $|Y|^{|\mathcal{R}|}$ self-stabilizing functions. There are exactly this number of self-stabilizing functions if and only if every $A \in \mathcal{X}$ has a unique root $R \in \mathcal{R}$.
\end{corollary*}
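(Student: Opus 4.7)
The plan is to set up a correspondence between self-stabilizing functions $f: \mathcal{X} \to Y$ and assignments $g: \mathcal{R} \to Y$, then count admissible assignments. By the Main Theorem together with the root-set property, the values of any self-stabilizing $f$ on all of $\mathcal{X}$ are forced by its values on $\mathcal{R}$: for each $A \in \mathcal{X}$ pick any root $R \in \mathcal{R}$ with $R \subseteq A$; then $R \subseteq A$ gives $f(A) = f(R)$. Hence the restriction map $f \mapsto f|_{\mathcal{R}}$ is injective, so the number of self-stabilizing functions is bounded above by $|Y|^{|\mathcal{R}|}$; in particular there are only finitely many.

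Next I would determine when this upper bound is attained by analyzing which assignments $g: \mathcal{R} \to Y$ extend to self-stabilizing functions. Suppose first that every $A \in \mathcal{X}$ has a unique root $R_A \in \mathcal{R}$ with $R_A \subseteq A$. Then for any $g$, define $\widetilde{f}(A) := g(R_A)$. To verify $\widetilde{f}$ is self-stabilizing via the Main Theorem, take $A \subseteq B$: since $R_A \subseteq A \subseteq B$ and $R_A \in \mathcal{R}$, $R_A$ is a root of $B$, and by uniqueness $R_A = R_B$; thus $\widetilde{f}(A) = g(R_A) = g(R_B) = \widetilde{f}(B)$. Every assignment $g$ therefore realizes a distinct self-stabilizing function and the restriction map is surjective, giving exactly $|Y|^{|\mathcal{R}|}$ self-stabilizing functions.

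Conversely, suppose some $A \in \mathcal{X}$ admits two distinct roots $R \neq R'$ with $R, R' \subseteq A$. Any self-stabilizing $f$ must satisfy $f(R) = f(A) = f(R')$ by the Main Theorem, so the coordinates of $f|_{\mathcal{R}}$ at $R$ and $R'$ are forced to agree. Provided $|Y| \geq 2$, this constraint rules out at least one assignment and strictly reduces the count below $|Y|^{|\mathcal{R}|}$ (the degenerate case $|Y| = 1$ gives a single constant function regardless). The main conceptual point is simply recognizing that coincidences of roots inside a common $A$ are the \emph{only} obstructions to freely choosing $g$; once the Main Theorem has reduced everything to reasoning about the finite set $\mathcal{R}$, the counting argument is straightforward, so I expect no serious technical obstacle beyond a careful bookkeeping of cases.
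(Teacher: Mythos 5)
Your proposal is correct and follows essentially the same route as the paper: the Main Theorem forces $f(A)=f(R)$ for any root $R\subseteq A$, so a self-stabilizing $f$ is determined by its restriction to the finite set $\mathcal{R}$, giving the bound $|Y|^{|\mathcal{R}|}$, with equality precisely when roots are unique (shared roots force coordinates of the assignment to agree, while unique roots let every assignment extend via the sufficiency direction of the Main Theorem). If anything you are slightly more careful than the paper, which states the argument informally and does not flag the degenerate case $|Y|=1$, where the ``only if'' direction needs the caveat you noted.
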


\subsection{Nontrivial Examples of Self-Stabilizing Functions} \label{ss:examples}
If we can restrict the domain $\mathcal{X}$ to exclude some inputs, then it can become easier to generate problems that admit self-stabilizing solutions.
{
\begin{itemize}
    \item In \textbf{Chemical Reaction Networks (CRN)}, it can be desirable to compute boolean circuits. In CRNs one prevalent technique to compute some boolean function $g: \{0,1\}^n \rightarrow \{0,1\}^m$ is to have $n$ different species, each with 2 sub-species each \cite{Cook2009}. That is, we have molecules $s_1^0, s_1^1, s_2^0, s_2^1, \hdots,$ $s_n^0, s_n^1$, where molecule $s_i^j$ signifies that the $i^{th}$ bit has value $j \in \{0,1\}$. All species will appear in the input, but only one sub-species per species will appear (i.e. $s_i^0$ and $s_i^1$ will not both appear, but one of them will). This way of formulating the input allows for self-stabilizing computation of boolean functions! This is because if $A \subseteq B$ for some input multisets, $A$ and $B$, of molecules, they will include the same sub-species and hence have the same output $g(A) = g(B)$. 
    
    \item Generalizing the former example, suppose we have an input alphabet of $n$ symbols, but only $k < n$ of these symbols will ever show up in the population (though a given symbol could occur multiple times). Then we can compute \textit{any} self-stabilizing function of the $k$ present members. 
    
    For instance, suppose we have $k$ different classes of finite-state mobile agents, $a_1, \hdots, a_k$. There are a nonzero number of agents in each class $a_i$, and agents within the same class are running a (possibly not self-stabilizing) population protocol. Eventually every agent in class $a_i$ will converge and be outputting the same common value $o_i$. These outputs $o_1, \hdots, o_k$ will then be the input to our self-stabilizing population protocol. The agents might perform some sort of protocol composition where a tuple of states $(q,s)$ are used for each agent; $q$ would correspond to the agent state in the first protocol, and then $s$ would correspond to the agent state in the self-stabilizing protocol. If the first protocol was also self-stabilizing (it doesn't have to be), then the entire protocol composition would be self-stabilizing as well.
    
    As an example, we could have each class of agents $a_i$ compute some boolean circuit $C_i$ in a self-stabilizing way. Eventually each agent of class $a_i$ will be outputting some $m$ character string over the set $\{0_i, 1_i\}$. Notice that even both agents in class $a_1$ and in class $a_2$ intend to output 0110, the former agents will output $0_11_11_10_1$ and the latter agents will output $0_21_21_20_2$. Once all of these boolean circuit computations are done, all the agents across classes will calculate the majority output in a self-stabilizing way (and they would be able to do so since the outputs $o_1, \hdots, o_k$ are all distinguishable).
    \item Any computable function in which the number of agents is a fixed constant $k$ admits a self-stabilizing solution (there can't be any subsets, so the condition is vacuously true). For instance, distribute bits amongst exactly $k$ agents; we can create a self-stabilizing protocol to output 1 if any permutation of those $k$ bits represents a prime number in binary.
\end{itemize}
} 

\subsection{The Basic Model}
There are different formalizations of the basic population protocol model. We adopt the basic one first introduced by Angluin et. al \cite{Model}, except where we impose that any two agents are allowed to interact.

A \textbf{population protocol} is a tuple $\mathcal{P} = (Q, \Sigma, Y, I, O, \delta)$ where $Q$ is the finite set of agent states; $\Sigma$ is a finite set of input symbols; $Y$ is a finite set of output symbols; $I: \Sigma \rightarrow Q$ is an input function; $O: Q \rightarrow Y$ is the output function; and $\delta: (Q \times \Sigma) \times (Q \times \Sigma) \rightarrow (Q \times Q)$ is the transition function.

Note that population protocols are independent over the size of the system; rather, first a population protocol is specified and then it is run on some set of agents $V$. At the beginning of execution, an input assignment $\alpha: V \rightarrow \Sigma$ is provided, providing each agent an input symbol (we will observe that in complete graphs, we can view input assignments as merely a multiset of inputs). Since our model focuses on the computation of functions, we will enforce that the input does not change (i.e. the input is hardwired into every agent). If an agent is assigned input symbol $\sigma \in \Sigma$, it will determine its starting state via the input function as $I(\sigma)$ (note that in the basic model the input determines the starting states, but in self-stabilization we do not consider starting states). At each time step, a scheduler selects (subject to a global fairness condition) an agent pair $(u,v)$ for interaction; semantically the scheduler is selecting agents $u$ and $v$ to interact, where $u$ is called the \textit{initiator} and $v$ is called the \textit{responder}. Agents $u$ and $v$ will then state transition via $\delta$; letting $q_u, q_v \in Q$ and $\sigma_u, \sigma_v \in \Sigma$ be the states and inputs for $u$ and $v$ respectively, the new respective states will be the output of $\delta((q_u, \sigma_u),(q_v, \sigma_v))$. 

We use the notion of a \textit{configuration} to describe the collective agent states.
\begin{definition}{\textit{Configuration.}}
    Let $V$ be the set of agents and $Q$ be a set of states. A \textit{configuration} of a system is a function $C:V \rightarrow Q$ mapping every agent to its current state.
\end{definition}

When running the protocol we will go through a sequence of configurations. If $C$ and $C'$ are configurations under some population protocol such that $C'$ can follow from a single agent interaction in $C$, we write $C \rightarrow C'$. If a series of interactions takes us from $C$ to $C'$, we write $C \xrightarrow{*} C'$.

\begin{definition}{\textit{Execution.}}
    An \textit{execution} of a population protocol is a sequence of configurations $\mathcal{C} = C_1C_2\hdots$ where for all $i$, $C_i \rightarrow C_{i+1}$.
\end{definition}

The scheduler is subject to a global fairness condition, which states that if a configuration can follow from an infinitely occurring configuration, then it must also occur infinitely often.

\begin{definition}{\textit{Global Fairness Condition} \cite{Model}\textit{.}}
    Let $C$ and $C'$ be configurations such that $C \rightarrow C'$. If $C$ appears infinitely often during an execution, then $C'$ appears infinitely often during that execution.
\end{definition}

At each time step every agent outputs some symbol from $Y$ via output function $O$. If all agents output the same symbol and continue to do so for each time step afterwards, we say the protocol's output is that symbol (interactions may continue, but the output of the agents remain that symbol). Note that when computing functions, the protocol should output the same symbol when the same inputs are provided, irrespective of the globally fair scheduler's behavior. When the population has determined the final output, we say it has \textit{converged}.

\begin{definition}{\textit{Convergence} \cite{Model}\textit{.}}
    A population is said to have \textit{converged} to an output $y \in Y$ during a population protocol's execution if the current configuration $C$ is such that each agent's output is $y$ and $C \rightarrow C'$ implies that $C'$ has each agent with the same output $y$.
\end{definition}

When agents $u$ and $v$ interact, their state transition is a function of both agent's current state and respective inputs; since this is all the information they receive, an agent does not learn the identity of the agent it interacts with, but merely the state and input of that agent. In this sense, agents with the same input that are in the same state are \textit{indistinguishable} from one another. Furthermore, population protocols are independent of the number of agents, making it impossible to design the state set to give every agent a unique identifier, so agents are truly \textit{anonymous}.

As noted earlier, our model accepts input via an input assignment $\alpha: V \rightarrow \Sigma$, where $V$ is the set of agents and $\Sigma$ is our finite input alphabet. It is useful to note, though, that we can actually view our input instead as some finite multiset $A$ over alphabet $\Sigma$, with $|A| = |V|$ \cite{Aspnes}. Though we omit the proof, the idea follows from the fact that any two agents can interact, so which agent gets what input symbol is less important than what input symbols are provided to the system in the first place. We use the notation $m_A(\sigma)$ to denote the \textit{multiplicity}, the number of occurrences, of element $\sigma$ in multiset $A$.

\begin{definition}{\textit{Population Protocol Functions.}}
    Let $f: \mathcal{X} \rightarrow Y$ be a function where $\mathcal{X}$ is a set of multisets over the finite alphabet $\Sigma$. A population protocol $\mathcal{P}$ computes $f$ if and only if for any $A \in \mathcal{X},$ all executions of $\mathcal{P}$ with input $A$ converge to $f(A)$.
\end{definition}

In this paper when we say $\mathcal{P}$ is a population protocol (or simply protocol), we mean that it computes some function $f$. When we don't care about whether a population protocol computes some function, we will refer to it as a \textit{sub-protocol}. This will be useful jargon in our protocol composition in Section \ref{ss:protocolcomp}.


We say that a population protocol computing a function is \textit{self-stabilizing} when it can begin in any configuration and eventually converge (to the same output). Such a function is called a \textit{self-stabilizing} function.

\begin{definition}{\textit{Self-Stabilizing Protocol.}}
    Let $\mathcal{P} = (Q, \Sigma, Y, I, O, \delta)$ be a population protocol computing some function $f:\mathcal{X} \rightarrow Y$. $\mathcal{P}$ is called \textit{self-stabilizing} if and only if for any input multiset $A \in \mathcal{X}$, any set of agents $V$ of cardinality $|A|$, and any starting configuration $C: V \rightarrow Q$, we have that any execution of $\mathcal{P}$ converges to $f(A)$.
\end{definition}



\section{Constructing a Self-Stabilizing Protocol for the Basic Model} \label{s:construct}
We aim to show that not only does Theorem \ref{Self-Stabilizing Population Protocol Theorem} specify \textit{necessary} conditions for computing self-stabilizing functions (shown in \ifthenelse{\boolean{isFullVersion}}{Appendix \ref{a:impossibility}}{the full version of this paper \cite{full}}), but they are also \textit{sufficient} for self-stabilization. We do this by generating a self-stabilizing protocol for computing all functions of the form $f: \mathcal{X} \rightarrow Y$ where for all $A,B \in \mathcal{X}, A \subseteq B \implies f(A) = f(B)$.
To do this, we must introduce a new notion known as the root set of a set of multisets.

\subsection{The Root Set} \label{SS:RootSet}
The inputs for our agents are represented by a multiset of inputs, an element of domain $\mathcal{X}$. We are interested in a kind of subset $\mathcal{R} \subseteq \mathcal{X}$ such that all multisets in $\mathcal{X}$ are a superset of some multiset in $\mathcal{R}$. This section aims to show that all sets of multisets $\mathcal{X}$ actually have a \textit{finite} $\mathcal{R}$.
\begin{definition}{\textit{Root Set and its Roots.}} \label{d:rootset}
    Let $\mathcal{X}$ be a set of finite multisets. A subset $\mathcal{R} \subseteq \mathcal{X}$ is called a \textit{root set} of $\mathcal{X}$ if and only if for all $A \in \mathcal{X}$, there exists $R \in \mathcal{R}$ such that $R \subseteq A$. We call a multiset $R \in \mathcal{R}$ a \textit{root} of $A$.
\end{definition}
\ifthenelse{\boolean{isFullVersion}}{See Appendix \ref{A:RootSetExample} for an example.}{} Notice that $\mathcal{X}$ is always trivially its own root set. However, $\mathcal{X}$ can be infinitely large; for example the set of all nonempty finite multisets on alphabet $\Sigma = \{a\}$ is $\mathcal{X} = \{\{a\}, \{a, a\}, \hdots\}$. However, we are primarily interested in the existence of a \textit{finite} root set over our function $f$'s domain. Dickson's Lemma \cite{Dickson} provides us what we need, though an alternative proof is in \ifthenelse{\boolean{isFullVersion}}{Appendix \ref{A:RootSetProof}}{our full paper \cite{full}}.

First, consider a finite multiset over an alphabet, $\Sigma$, of $n$ elements. An equivalent representation of a multiset is as a vector of multiplicities in $\mathbb{N}^n$. For instance, the multiset $\{a, a, b\}$ on ordered alphabet $\Sigma = \{a, b, c\}$ would be represented by $(2, 1, 0)$. Hence a set of finite multisets on an alphabet of size $n$ could be considered a subset $S \subseteq \mathbb{N}^n$. Consider two vectors $\vec{n}, \vec{m} \in \mathbb{N}^n$, and denote $n_i$ and $m_i$ as the $i^{th}$ component in the corresponding vectors. Define the pointwise partial order $\vec{n} \leq \vec{m} \iff n_i \leq m_i \text{ for all } i$. A minimal element of a subset $S \subseteq \mathbb{N}^n$ is an element that has no smaller element with respect to this partial order. Now we can state Dickson's Lemma.

\begin{lemma}{\textit{Dickson's Lemma}.}
    In every subset $S \neq \emptyset$ of $\mathbb {N} ^{n}$, there is at least one but no more than a finite number of elements that are minimal elements of $S$ for the pointwise partial order.
\end{lemma}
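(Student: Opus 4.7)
The plan is to prove Dickson's Lemma by first establishing the stronger fact that $(\mathbb{N}^n, \leq)$ under the pointwise partial order is a \emph{well-quasi-order}: every infinite sequence $\vec{m}_1, \vec{m}_2, \ldots$ of elements of $\mathbb{N}^n$ admits an infinite non-decreasing subsequence $\vec{m}_{k_1} \leq \vec{m}_{k_2} \leq \cdots$. Both the existence of a minimal element and the finiteness of the set of minimal elements will then follow directly from this property.

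First I would prove the well-quasi-order property by induction on $n$. For the base case $n = 1$, any infinite sequence in $\mathbb{N}$ is either bounded, in which case some value recurs infinitely often and yields a constant (hence non-decreasing) subsequence, or unbounded, in which case one may pick indices $k_1 < k_2 < \cdots$ with $a_{k_1} < a_{k_2} < \cdots$ greedily. For the inductive step, given a sequence $(\vec{m}_k)_{k \geq 1}$ in $\mathbb{N}^n$, split each vector as $\vec{m}_k = (a_k, \vec{m}_k')$ with $a_k \in \mathbb{N}$ and $\vec{m}_k' \in \mathbb{N}^{n-1}$. Apply the base case to $(a_k)_k$ to pass to an infinite subsequence along which the first coordinate is non-decreasing, and then apply the inductive hypothesis to the corresponding subsequence of $(\vec{m}_k')_k$ to extract a further infinite subsequence that is non-decreasing in the remaining coordinates. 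The resulting subsequence is non-decreasing under the pointwise order on $\mathbb{N}^n$.

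Next, I would derive the two conclusions for an arbitrary nonempty $S \subseteq \mathbb{N}^n$. For existence of at least one minimal element, pick any $\vec{s} \in S$; the set $\{\vec{t} \in S : \vec{t} \leq \vec{s}\}$ is contained in the finite box $\{0, 1, \ldots, s_1\} \times \cdots \times \{0, 1, \ldots, s_n\}$, hence has a minimal element, which is also minimal in $S$. For finiteness, suppose toward contradiction that $\vec{m}_1, \vec{m}_2, \ldots$ is an infinite sequence of pairwise distinct minimal elements of $S$. By minimality they form an antichain: no $\vec{m}_i$ is $\leq$ any other $\vec{m}_j$. But the well-quasi-order property supplies indices $i < j$ with $\vec{m}_i \leq \vec{m}_j$, contradicting incomparability.

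The main obstacle is the iterated subsequence extraction in the inductive step: one must ensure that after restricting to a subsequence making the first coordinate non-decreasing, there are still infinitely many indices on which to invoke the inductive hypothesis, and that the final composite subsequence is jointly non-decreasing in all $n$ coordinates. The ``peel off one coordinate at a time'' argument handles this cleanly and is really the conceptual heart of the proof; the rest of the lemma reduces, via the well-quasi-order property, to the elementary observation that antichains cannot contain a non-decreasing pair.
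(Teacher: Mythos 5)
Your proof is correct, but it takes a genuinely different route from the paper's. You establish that $(\mathbb{N}^n,\leq)$ is a well-quasi-order via the subsequence-extraction induction (peel off the first coordinate, pass to a subsequence non-decreasing there, then invoke the inductive hypothesis on the remaining coordinates), and then read off both conclusions: existence of a minimal element from the finite box below any fixed $\vec{s}\in S$, and finiteness because the minimal elements form an antichain while the wqo property forces a comparable pair in any infinite sequence of distinct elements. The paper instead never argues about minimal elements or infinite sequences directly: it proves the equivalent \emph{Finite Root Set Lemma} (Appendix \ref{A:FiniteVectorRootSet}) by strong induction on the dimension, fixing one vector $\vec{v}\in\mathcal{Y}$, letting $\vec{v}$ serve as a root for everything that dominates it, and partitioning the remaining vectors into the finitely many slices $F_{i,m}=\{\vec{u}: u_i=m\}$ with $m<v_i$, on each of which the coordinate $i$ can be deleted and the inductive hypothesis applied; the union of the resulting finite root sets together with $\{\vec{v}\}$ is the desired finite root set. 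Your argument is the standard order-theoretic one and yields the stronger wqo statement (any infinite sequence contains a comparable, indeed non-decreasing, pair), which is reusable beyond this lemma; the paper's argument is a more hands-on combinatorial construction that directly produces a finite root set, which is exactly the object the protocol construction consumes, and avoids any infinite subsequence extraction. Minor points to keep tight in your write-up: in the unbounded case of the base case, note that every tail of an unbounded sequence is unbounded (otherwise the whole sequence would be bounded), so the greedy choice never stalls; and in the finiteness step, say explicitly that the non-decreasing pair $\vec{m}_i\leq\vec{m}_j$ together with pairwise distinctness contradicts minimality of $\vec{m}_j$.
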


This is equivalent to the existence of a finite root set.

\begin{corollary}
    Let $\mathcal{X}$ be a set of finite multisets over a finite alphabet. $\mathcal{X}$ has a finite root set. In other words, there exists a finite subset $\mathcal{R} \subseteq \mathcal{X}$ such that for any $A \in \mathcal{X}$, there exists $R \in \mathcal{R}$ such that $R \subseteq A$.
\end{corollary}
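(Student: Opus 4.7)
The plan is to invoke Dickson's Lemma on the natural vector encoding of multisets and then pull the resulting finite set of minimal elements back into $\mathcal{X}$. Concretely, fix an ordering $\sigma_1,\dots,\sigma_n$ of the finite alphabet $\Sigma$ and define the map $\varphi:\mathcal{X}\to\mathbb{N}^n$ by $\varphi(A) = (m_A(\sigma_1),\dots,m_A(\sigma_n))$. This is a bijection between finite multisets over $\Sigma$ and $\mathbb{N}^n$, and it sends multiset inclusion to the pointwise order: $A\subseteq B \iff \varphi(A)\leq \varphi(B)$. So the problem is translated into finding a finite subset $\mathcal{R}^\star \subseteq S := \varphi(\mathcal{X})$ such that every element of $S$ dominates some element of $\mathcal{R}^\star$.

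Next, I would set $\mathcal{R}^\star$ to be exactly the set of $\leq$-minimal elements of $S$, and define $\mathcal{R} := \varphi^{-1}(\mathcal{R}^\star)\subseteq\mathcal{X}$. By Dickson's Lemma, $\mathcal{R}^\star$ is finite (and nonempty, if $S$ is), hence $\mathcal{R}$ is finite. It remains to verify the root-set property: for every $A\in\mathcal{X}$ there is some $R\in\mathcal{R}$ with $R\subseteq A$. Equivalently, every $\vec v\in S$ dominates some minimal element of $S$.

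This last step is where a small care is needed, and I expect it to be the only nontrivial point. The argument is that the pointwise order on $\mathbb{N}^n$ is well-founded: any strictly descending chain in $\mathbb{N}^n$ has bounded coordinate sum and so terminates. Starting from any $\vec v\in S$, consider the non-empty set $S_{\leq \vec v} := \{\vec u\in S : \vec u\leq \vec v\}$; pick an element of smallest coordinate sum inside $S_{\leq \vec v}$. Any such element is automatically a minimal element of $S$ (a strictly smaller element of $S$ would still be $\leq\vec v$ and would have a smaller coordinate sum), and it lies below $\vec v$. Thus every $\vec v\in S$ dominates some element of $\mathcal{R}^\star$, and translating back through $\varphi$ gives the required root $R\subseteq A$ for every $A\in\mathcal{X}$.

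Finally, the edge case $\mathcal{X}=\emptyset$ is vacuous (the empty set is a finite root set). This completes the construction and shows that the finite set $\mathcal{R}$ built from the minimal elements of $\varphi(\mathcal{X})$ is a root set in the sense of Definition \ref{d:rootset}. Uniqueness and minimality of this $\mathcal{R}$ — alluded to earlier in the introduction — follow easily from the fact that no minimal element of $S$ can be removed without losing the domination property, but the corollary itself requires only the existence of a finite $\mathcal{R}$, which is what the argument above establishes.
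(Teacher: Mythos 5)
Your proof is correct and follows essentially the same route as the paper's primary argument: encode multisets as multiplicity vectors in $\mathbb{N}^n$, apply Dickson's Lemma to obtain the finite set of pointwise-minimal elements, and take these as the root set, with your well-foundedness argument (picking an element of smallest coordinate sum in $S_{\leq \vec v}$) correctly filling in the step, left implicit in the paper, that every element of $S$ dominates a minimal element. The paper's appendix additionally gives a self-contained alternative proof by induction on the dimension that avoids Dickson's Lemma, but your Dickson-based argument matches the approach taken in the main text.
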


A interesting corollary is that the root set of minimal size is unique, making it legitimate to speak of \textit{the} minimally-sized root set. See \ifthenelse{\boolean{isFullVersion}}{Appendix \ref{a:uniquerootset}}{the full version of this paper for a proof \cite{full}}.
\begin{corollary}
    Let $\mathcal{X}$ be a set of finite multisets over a finite alphabet. The minimal length root set $\mathcal{R}$ of $\mathcal{X}$ is unique.
\end{corollary}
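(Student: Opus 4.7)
The plan is to exhibit an explicit candidate for the minimally-sized root set, namely the set $\mathcal{M}$ of $\subseteq$-minimal elements of $\mathcal{X}$, and then show that every root set of $\mathcal{X}$ must contain $\mathcal{M}$. Uniqueness then follows immediately: any root set has size at least $|\mathcal{M}|$, and the only one achieving this bound is $\mathcal{M}$ itself.

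First I would verify that $\mathcal{M}$ is itself a root set. Given any $A \in \mathcal{X}$, viewed as a vector in $\mathbb{N}^n$, the collection $\{B \in \mathcal{X} : B \subseteq A\}$ is nonempty (it contains $A$) and sits inside the finite box $\prod_{i}\{0,1,\dots,m_A(\sigma_i)\}$, so it has a $\subseteq$-minimal element $R$. Any such $R$ is in fact minimal in all of $\mathcal{X}$: if some $B \in \mathcal{X}$ satisfied $B \subsetneq R$, then $B \subseteq A$ would place $B$ in the same finite subset and contradict minimality of $R$ there. Hence $R \in \mathcal{M}$ with $R \subseteq A$, proving $\mathcal{M}$ is a root set. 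Finiteness of $\mathcal{M}$ is precisely the content of Dickson's Lemma already invoked in the previous corollary.

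The main step is showing that every root set $\mathcal{R}$ of $\mathcal{X}$ contains $\mathcal{M}$. Fix any $M \in \mathcal{M}$. Since $\mathcal{R}$ is a root set and $M \in \mathcal{X}$, there exists some $R \in \mathcal{R}$ with $R \subseteq M$. But $R$ also lies in $\mathcal{X}$, because $\mathcal{R} \subseteq \mathcal{X}$, and minimality of $M$ in $\mathcal{X}$ then forces $R = M$, giving $M \in \mathcal{R}$. Thus $\mathcal{M} \subseteq \mathcal{R}$ for every root set $\mathcal{R}$, and in particular $|\mathcal{R}| \geq |\mathcal{M}|$. Combining the two steps, $\mathcal{M}$ is a root set of minimum cardinality, and any other minimum-cardinality root set both contains $\mathcal{M}$ and has the same size as $\mathcal{M}$, so it must equal $\mathcal{M}$.

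I do not anticipate a serious obstacle; the structural interaction between minimality and the root-set condition is rigid, since no root set can omit a minimal element without failing to cover it. The only mildly subtle point is confirming that a minimal element of the downset $\{B \in \mathcal{X} : B \subseteq A\}$ is truly minimal in all of $\mathcal{X}$, which is where the finiteness of the ambient box in $\mathbb{N}^n$ (and hence implicitly Dickson's Lemma) is used.
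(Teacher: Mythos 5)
Your proof is correct, but it takes a genuinely different route from the paper. The paper argues by contradiction directly on two hypothetical minimum-size root sets $\mathcal{R}$ and $\mathcal{R}'$: taking $R_i \in \mathcal{R} \setminus \mathcal{R}'$, it uses the root-set property twice to sandwich $R_j \subseteq R' \subseteq R_i$ with $R_j \in \mathcal{R}$, $R' \in \mathcal{R}'$, and then either $R_i = R_j$ forces $R' = R_i \in \mathcal{R}'$ (contradiction) or $R_i \neq R_j$ lets one discard $R_i$ and contradict minimality of $|\mathcal{R}|$. You instead exhibit the canonical candidate explicitly: the set $\mathcal{M}$ of $\subseteq$-minimal elements of $\mathcal{X}$, shown to be a root set via the finiteness of the downset $\{B \in \mathcal{X} : B \subseteq A\}$ (noting correctly that a minimal element of this downset is minimal in all of $\mathcal{X}$), and shown to be contained in \emph{every} root set because a root of a minimal element must equal it. Your argument is slightly longer but buys more: it characterizes the minimum root set as exactly the minimal elements of $\mathcal{X}$ (making the connection to Dickson's Lemma transparent and rendering, e.g., the strong-downwards-antichain corollary nearly immediate), and it shows $\mathcal{M}$ is a minimum in the containment sense, not merely in cardinality. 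The paper's proof is more local and needs no construction, only the previously established existence of a finite minimum-size root set. One small point to keep explicit in your write-up: the final step $\mathcal{M} \subseteq \mathcal{R}$ with $|\mathcal{R}| \leq |\mathcal{M}|$ implying $\mathcal{R} = \mathcal{M}$ relies on finiteness, which you do have from Dickson's Lemma (or from $\mathcal{M}$ being contained in the finite minimum-size root set), so no gap remains.
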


From order theory, we note that the minimal root set is a strong downwards antichain. We call a subset $A$ of a poset $P$ a strong downwards antichain if $A$ is an antichain (a subset of $P$ with incomparable elements) and no two distinct elements of $A$ have a smaller element in $P$. The proof is in \ifthenelse{\boolean{isFullVersion}}{Appendix \ref{a:strongdownwardsantichain}}{the full version \cite{full}}.
\begin{corollary}
    Let $\mathcal{X}$ be a set of finite multisets over a finite alphabet. The minimal length root set $\mathcal{R}$ of $\mathcal{X}$ is a strong downwards antichain with respect to the subset partial ordering.
\end{corollary}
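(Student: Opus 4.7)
\bigskip

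\noindent\textbf{Proof plan.} The plan is to exploit the minimality of $\mathcal{R}$ twice, once for each half of the strong downwards antichain condition, by exhibiting a strictly smaller root set whenever the condition fails.

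First I would verify that $\mathcal{R}$ is an antichain under $\subseteq$. Suppose for contradiction that $R_1, R_2 \in \mathcal{R}$ are distinct with $R_1 \subsetneq R_2$. Consider $\mathcal{R}' = \mathcal{R} \setminus \{R_2\}$. For any $A \in \mathcal{X}$, the original root set $\mathcal{R}$ provides some $R \in \mathcal{R}$ with $R \subseteq A$; if $R \neq R_2$ then $R \in \mathcal{R}'$ already roots $A$, and if $R = R_2$ then $R_1 \subseteq R_2 \subseteq A$ so $R_1 \in \mathcal{R}'$ roots $A$. Hence $\mathcal{R}'$ is a root set of strictly smaller cardinality, contradicting the minimality of $\mathcal{R}$ (which is well defined by the previous uniqueness corollary).

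Next I would show that no two distinct $R_1, R_2 \in \mathcal{R}$ admit a common strictly smaller element in $\mathcal{X}$. Suppose otherwise: there exist distinct $R_1, R_2 \in \mathcal{R}$ and some $S \in \mathcal{X}$ with $S \subsetneq R_1$ and $S \subsetneq R_2$. (By the antichain property just shown, we may in fact restrict to $S \neq R_1, R_2$ without loss of generality, since $S = R_i$ would force the other $R_{3-i}$ to be comparable with $R_i$.) Form
\[
\mathcal{R}' = (\mathcal{R} \setminus \{R_1, R_2\}) \cup \{S\}.
\]
For any $A \in \mathcal{X}$, pick a root $R \in \mathcal{R}$ of $A$; if $R \notin \{R_1, R_2\}$ then $R \in \mathcal{R}'$ roots $A$, and if $R \in \{R_1, R_2\}$ then $S \subseteq R \subseteq A$, so $S \in \mathcal{R}'$ roots $A$. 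Thus $\mathcal{R}'$ is a root set. Since $R_1, R_2$ are distinct and $S \notin \{R_1, R_2\}$, we have $|\mathcal{R}'| \le |\mathcal{R}| - 1$, again contradicting minimality.

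The two arguments together give that $\mathcal{R}$ is a strong downwards antichain. There is no real obstacle here beyond carefully handling the edge case where the candidate common lower bound $S$ coincides with one of $R_1, R_2$; I would dispatch this case by invoking the antichain half of the proof to rule it out. No use of Dickson's Lemma is needed beyond what the previous corollaries supply; the entire argument is a swap-and-shrink manipulation of the minimal root set.
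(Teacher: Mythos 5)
Your proposal is correct and follows essentially the same argument as the paper: remove the larger of two comparable roots to contradict minimality for the antichain part, and replace two roots sharing a common lower bound $S \in \mathcal{X}$ by $S$ itself to get a strictly smaller root set for the strong downwards part. The extra care about $S$ coinciding with $R_1$ or $R_2$ is a harmless refinement the paper elides.
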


Now suppose we have a function $f: \mathcal{X} \rightarrow Y$ over multisets such that for $A,B \in \mathcal{X}, A \subseteq B \implies f(A) = f(B)$. We have that there exists a finite, minimal length root set $\mathcal{R} \subseteq \mathcal{X}$. If the input to the population is $A$, it suffices to identify the root $R \in \mathcal{R}$ such that $R \subseteq A$, since the output would be $f(R) = f(A)$.

\subsection{Self-Stabilizing Population Protocol Construction} \label{SS:Protocol}
Before we formally specify a universal self-stabilizing population protocol, it is much more helpful to first understand how it works at a high level. As a reminder, we will be working with functions of the form $f: \mathcal{X} \rightarrow Y$, where $\mathcal{X}$ is some set of finite multisets on a finite alphabet and $Y$ is the finite output alphabet. The function $f$ is also assumed to satisfy the following property: $$\forall A,B \in \mathcal{X}, A \subseteq B \implies f(A) = f(B).$$

We need to determine how to design our protocol to compute $f$ in a self-stabilizing manner. Since $\mathcal{X}$ is a set of finite multisets, it has a finite root set; let $\mathcal{R} \subseteq \mathcal{X}$ be the minimally sized, finite root set. $\mathcal{R}$ will be given some arbitrary fixed ordering so that we can index into it. Loosely speaking, we can compute $f$ by having each agent iterate through every root in the finite root set to see if a given root is a subset of the population's input $A$. 

As an agent iterates through the root set, how can it tell if the current root is indeed a subset of the population's input multiset, $A$? Suppose an agent has guessed that root $R_i \subseteq A$. Consider some other root $R_j$ where $f(R_i) \neq f(R_j)$. Naturally there are some symbols that occur more often in $R_j$ than in $R_i$, and vice versa. Consider a symbol $\sigma$ that occurs $n$ times in $R_j$ and $m$ times in $R_i$, where $n > m$. Suppose agents start counting how many other agents they see with input $\sigma$, and manage to identify there are at least $n$ of them. Now further suppose that this is the case for all such $\sigma$; that is, every $\sigma$ occurring more often in $R_j$ has at least that many inputs distributed in the population. Then we now know that $f(R_i) \neq f(A)$ by way of contradiction. For any symbol $\sigma$ occurring $n$ times in $R_j$ and $m$ times in $R_i$, we have two cases. If $n > m$, then we know that input multiset $A$ has at least $n$ instances of $\sigma$. If $n \leq m$, since $R_i \subseteq A$ by our hypothesis then we also know that there are at least $n$ instances of $\sigma$. Therefore we simultaneously have that $R_j \subseteq A$ and $R_i \subseteq A$, which is a contradiction since $f(A) = f(R_j) \neq f(R_i)$! Therefore our initial assumption was wrong and $R_i$ is not a subset of the input multiset $A$, and so we should increment our index $i$ to guess root $R_{i+1}$. Note that we could also change our index $i$ to become $j$, and the authors believe this would be a faster protocol; however since this is a paper about computability, we leave this optimization as an observation.

On the other hand if $R_i$ really is a subset of $A$, then the existence of such a $R_j$ would be impossible. If our counters are all initialized to 0, then no agent counter would ever increment high enough to identify such a $R_j$. However since the protocol may start in an arbitrary configuration, we have to make sure to reset counters whenever an agent increments and guesses a new root.

To formalize this idea, we need convenient notation. We define $MORE_{i,j}$ to be the set of symbols occurring more often in $R_j$ than in $R_i$ when $f(R_i) \neq f(R_j)$.
$$
    MORE_{i,j} = 
    \begin{cases}
    \{ \sigma \mid m_{R_i}(\sigma) < m_{R_j}(\sigma) \} & f(R_i) \neq f(R_j) \\
    \emptyset & \text{otherwise} 
    \end{cases}.
$$
Each agent will have a table indexed by $i$ and $j$, where each entry is a binary string of length $|MORE_{i,j}|$. The $k^{th}$ bit of this string is set to 1 when the agent counts that the $k^{th}$ symbol of $MORE_{i,j}$ occurs in the population as often as it does in $R_j$; otherwise it is 0. Once any entry in this table becomes a (nonempty) bit string of all 1's, then the agent tries a new root. To keep track of these counts, each agent will also maintain a nonnegative integer $count$; when two agents with the same $count$ and same input symbol $\sigma$ meet, the responder will increment its value. To keep the states finite, the count is bounded above by the maximum multiplicity that occurs in the root set. It's important to note that the fact that the root set is finite is crucial to keeping the number of states here finite.

\subsection{Sub-Protocols for Protocol Composition} \label{ss:protocolcomp}
The self-stabilizing population protocol we construct will be a protocol composition $A \times B \times C$, where the input is given to $A$, the input to $B$ is the output of $A$, the input to $C$ is the output of $B$, and the composition output is the output of $C$. With the previous discussion as our design motivation, we decompose our protocol into three distinct sub-protocols:
{ 
\begin{enumerate}
    \item $SymbolCount$. This sub-protocol implements a simple modular counting mechanism, where agents with the same input symbol $\sigma$ compare their counts. If two agents with the same count meet, then the responder increments its count (modulo the maximum multiplicity of any symbol in the root set, $M$, to keep things finite). If there are at least $k < M$ agents with the same symbol, then some agent must eventually have their count \textit{at least} $k - 1$ by the Pigeonhole Principle, irrespective of initial configuration. The converse is only true if all the counts are initialized to 0, which is problematic if we are designing a self-stabilizing protocol that can initialize in any configuration. We will circumvent this by having the larger protocol composition reset this counter whenever moving on to the next root.
    \item $WrongOutput?$. This sub-protocol maintains a table indexed by $i,j \in \{0,1,$ $\hdots, |\mathcal{R}| - 1\}$, where $\mathcal{R}$ is the minimally sized root set. Each entry will be a binary string of length $|MORE_{i,j}|$, where $MORE_{i,j}$ is the set of all input symbols occurring more often in $R_j$ than in $R_i$. If the $k^{th}$ symbol of $MORE_{i,j}$ occurs in the population at least as often as it does in $R_j$, then the corresponding bit in the binary string is set to 1. Notice that this table is only useful when it is initialized with all entries as binary strings of all 0's. Again, we will circumvent this by having the larger protocol composition reset the table whenever moving on to the next root.
    \item $RootOutput$. This sub-protocol uses an index $root \in \{0, 1, \hdots, |\mathcal{R}| - 1\}$. Adopting the notation from the previous bullet and letting $root = i$, this protocol increments $root$ if there is a $j$ such that the $(i,j)$ entry in the table has a binary string of all 1's. The protocol composition will use the incrementing of $root$ to signal that the other sub-protocol states should reset.
\end{enumerate}
} 
We now list the three sub-protocols below.

\begin{definition}{$SymbolCount$.}
    Let $f: \mathcal{X} \rightarrow Y$ be a function over finite multisets on a finite alphabet $\Sigma$, and let $\mathcal{R}$ be a finite and minimally sized root set of $\mathcal{X}$. Each agent has a state called \emph{count}, where \emph{count} $\in \{0, 1, \hdots, M\}$ and $M$ is the maximum multiplicity of any symbol in the root set. Let $M = \max_{R \in \mathcal{R}, \sigma \in \Sigma}m_{R}(\sigma).$ Each agent takes as input some $\sigma$ from some input multiset $A \in \mathcal{X}.$ When an agent meets another agent with the same $\sigma$ and same \emph{count}, one of them will increment their \emph{count} modulo $M$. This guarantees that if there are $n$ agents with symbol $\sigma$, then eventually one agent will have \emph{count} $\geq n - 1$. 
                            $$(\emph{count}, \sigma) , (\emph{count}, \sigma) \rightarrow (\emph{count}, \emph{count} + 1 \mod M)$$
    An agent in state \emph{count} with input $\sigma$ outputs $(\emph{count}, \sigma)$. 
\end{definition}
Note that even though output is a function of just the state, we can formally allow $SymbolCount$ agents to output their input symbol as well, since we could take our current states $Q$ and define a new set of states via the cross product $Q' = Q \times \Sigma$. Then transitions on this could be defined in a similar way, where we would also have to make agents transition into a state that reflects their own input symbol (which can happen after every agent interacts once).

For $WrongOutput?$, our transitions need to satisfy two properties. 
\begin{enumerate}
    \item First, it is natural to have agents share their tables with each other to share their collected information about the counts of the inputs. Whenever an agent meets another agent, they bitwise OR their tables. 
    \item Fix some ordering on $MORE_{i,j}$ and denote the $k^{th}$ symbol by $\sigma_k$. Consider the bitstring entry at index $i$ and $j$. By our previous discussion we want the $k^{th}$ bit to be 1 if the number of occurrences of $\sigma_k$ in the population is at least its multiplicity in $R_j \in \mathcal{R}$.
    
    This can be enforced by setting this bit to 1 when the $count$ of some agent with input $\sigma_k$ is sufficiently high enough; then the first property will ensure this bit is set for the other agents by bitwise OR'ing tables. This can formally be accomplished by bitwise OR'ing the agent's table with an indicator table of the same dimension. Let $i$ and $j$ be the indices into the table, $k$ be the index into the bitstring entry, $\sigma_k$ be the $k^{th}$ symbol of $MORE_{i,j}$, and $count$ and $\sigma$ be the agent's inputs. Define
    $$    
        INDICATOR_{i,j,k} = 
        \begin{cases}
            1 & \text{ if } \sigma = \sigma_k \text{ and } count \geq m_{R_j}(\sigma) - 1 \\
            0 & \text{ otherwise }
        \end{cases}.
    $$
    We will denote this table as $INDICATOR(count, \sigma)$ to be explicit about the agent's inputs $count$ and $\sigma$. Note that the $count$ is 0-indexed, which is why we subtract by 1. Also, since agents can start in arbitrary states, every agent has to transition once to have the table updated with $INDICATOR$.
    
\end{enumerate}
\begin{definition}{$WrongOutput?$}
    Let $f: \mathcal{X} \rightarrow Y$ be a function over finite multisets on a finite alphabet, and let $\mathcal{R}$ be a finite and minimally sized root set of $\mathcal{X}$. Each agent has a state called $\varA{HAS-MORE}$, a table indexed by $i$ and $j$ where each entry is a binary string of length $|MORE_{i,j}|$ (as described in previous discussion).
                                    $$\varA{HAS-MORE}_{i,j} \in \{0,1\}^{|MORE_{i,j}|}.$$
    Each agent takes as input (from $SymbolCount$) some $\emph{count} \in \{0, 1, \hdots, M\}$, where $M$ is the maximum multiplicity of any symbol in the root set, and some $\sigma$ from the input multiset $A \in \mathcal{X}$. 
    
    Denote bitwise OR with symbol $\lor$. Our transition rule is
    $$\displaylines{
        (\varA{HAS-MORE}^1, (\emph{count}, \sigma)) , (\varA{HAS-MORE}^2, (\emph{count}', \sigma')) \cr \downarrow \cr (\varA{HAS-MORE}^3 \lor INDICATOR(\emph{count}, \sigma) , \cr\varA{HAS-MORE}^3 \lor INDICATOR(\emph{count}', \sigma')) \cr
        \text{ where } \varA{HAS-MORE}^3 = \varA{HAS-MORE}^1 \lor \varA{HAS-MORE}^2.
    }
    $$
    An agent outputs their $\varA{HAS-MORE}$ table.
\end{definition}

$RootOutput$ will have an integer $root$ that is an index into the root set. An agent with input $\varA{HAS-MORE}$ increments its $root$ modulo $|\mathcal{R}|$ if there is a $j$ such that $\varA{HAS-MORE}_{root, j}$ is all 1's. Of course, this would mean that the state $root$ keeps cycling every time an agent with such an input interacts. When we define the overall protocol composition after, this will be resolved by resetting the previous two sub-protocols' states when $root$ increments. Notice that agents don't care about the states of the other agents in an interaction; instead the behavior depends on how sub-protocol $WrongOutput?$ changes its output over time. Notice that we don't allow protocols to have changing inputs, but our 3 protocol composition allows two of the sub-protocols to have a changing input as the states of the other sub-protocols change.
\begin{definition}{$RootOutput$.}
    Let $f: \mathcal{X} \rightarrow Y$ be a function over finite multisets on a finite alphabet, and let $\mathcal{R}$ be a finite and minimally sized root set of $\mathcal{X}$. Each agent has a state called $\emph{root}$, where $\emph{root} \in \{0, 1, \hdots, |\mathcal{R}| - 1\}.$ Each agent takes as input $\varA{HAS-MORE}$, a table indexed by $i$ and $j$ where each entry is a binary string of length $|MORE_{i,j}|$ (as described in previous discussion).
                            $$\varA{HAS-MORE}_{i,j} \in \{0,1\}^{|MORE_{i,j}|}.$$
    Our transition rules are:
    $$\displaylines{
        ((\emph{root}_1, \varA{HAS-MORE}^1), (\emph{root}_2, \varA{HAS-MORE}^2)) \rightarrow (\emph{root}'_1, \emph{root}'_2) \cr
        \text{where } i' = 
        \begin{cases}
            i + 1 \mod |\mathcal{R}| & \text{if } \exists j \text{ s.t. } \varA{HAS-MORE}_{i, j} = 1^{|MORE_{i,j}|} \cr
            i & \text{otherwise}
        \end{cases}.
        }
    $$
    An agent outputs $f(R_{\emph{root}})$.
\end{definition}

Putting these three sub-protocols together, we get the \textit{SS-Protocol}, a self-stabilizing population protocol for $f$.

\begin{definition}{\textit{SS-Protocol}.}
    Let $f: \mathcal{X} \rightarrow Y$ be a function over finite multisets on a finite alphabet, and let $\mathcal{R}$ be a finite and minimally sized root set of $\mathcal{X}$. Define \textit{SS-Protocol} as protocol composition $$SymbolCount \times WrongOutput? \times RootOutput,$$ where we define protocol composition in the beginning of Section \ref{ss:protocolcomp}. We additionally modify this composition's transition function so that whenever an agent's $RootOutput$ state $\emph{root}$ gets incremented, then
    \begin{itemize}
        \item The $\emph{count}$ state from $SymbolCount$ becomes 0.
        \item The $\varA{HAS-MORE}$ state from $WrongOutput?$ becomes a table of all 0's.
    \end{itemize}
    When this modified transition occurs, we say the agent has been \textit{reset} and is in a \textit{reset state} (note there are multiple reset states as we allow $\emph{root}$ to be arbitrary).
\end{definition}

Say the input to the population is $A$ with root $R_i$. If the protocol ever incorrectly outputs $f(R_j) \neq f(R_i)$, it will recognize this because the protocol would count enough symbols in $MORE_{i,j}$ to make $\varA{HAS-MORE_{i,j}}$ a bitstring of all 1's. If the protocol outputs $f(R_i)$ and begins in a reset state, then there will never be a bitstring entry with all 1's. This protocol composition is a self-stabilizing protocol for $f$. See \ifthenelse{\boolean{isFullVersion}}{Appendix \ref{a:correct} for the full proof}{the full version of this paper for the proof \cite{full}}.

\begin{theorem*}[Self-Stabilizing Population Protocol Theorem]\label{T:SSPP}
    Let $f: \mathcal{X} \rightarrow Y$ be a function computable with population protocols on a complete interaction graph, where $\mathcal{X}$ is a set of multisets. Then
            $f \text{ has a self-stabilizing protocol} \iff (\forall A,B \in \mathcal{X}, A \subseteq B \implies f(A) = f(B)).$
\end{theorem*}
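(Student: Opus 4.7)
The theorem combines two separate arguments: the forward (necessity) direction, which is an impossibility result, and the backward (sufficiency) direction, which requires verifying that the \emph{SS-Protocol} of the previous subsection is self-stabilizing for every $f$ satisfying the subset condition. I would handle them in that order.

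For necessity I would follow the sub-population argument of Angluin, Aspnes, Fischer, and Jiang. Suppose some self-stabilizing protocol $\mathcal{P}$ computes $f$, yet there exist $A, B \in \mathcal{X}$ with $A \subseteq B$ and $f(A) \neq f(B)$. Partition a population $V$ of size $|B|$ as $V = V_A \sqcup V_{B\setminus A}$ with $|V_A| = |A|$, assigning inputs so that the multiset on $V$ is $B$ and the sub-multiset on $V_A$ is exactly $A$. Because $\mathcal{P}$ restricted to $V_A$ with input $A$ is itself self-stabilizing, and because an interaction confined to $V_A$ is indistinguishable from one in a standalone $V_A$-system, there is a finite sequence of $V_A$-only interactions driving $V_A$ to a configuration whose $V_A$-outputs are uniformly $f(A)$. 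Splicing such finite bursts between a fair round-robin over all other pairs of agents yields a globally fair schedule on $V$ along which the full population repeatedly leaves every $f(B)$-uniform configuration, contradicting the required convergence to $f(B)$.

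For sufficiency I would prove that \emph{SS-Protocol} converges to $f(A)$ on any input $A \in \mathcal{X}$ from any starting configuration. Fix a root $R_{*} \in \mathcal{R}$ with $R_{*} \subseteq A$, so $f(R_{*}) = f(A)$. The argument splits into two halves. First, I would show that the vector of $\emph{root}$ values across agents eventually stabilizes: an agent only increments $\emph{root}$ when one of its $\varA{HAS-MORE}$ rows becomes all ones, and every such increment resets that agent's $\emph{count}$ and $\varA{HAS-MORE}$ to baseline. Second, once $\emph{root}$ is frozen at a common index $i^{*}$, I would show $f(R_{i^{*}}) = f(A)$. The second half mirrors the design motivation preceding the protocol: if $f(R_{i^{*}}) \neq f(A) = f(R_{*})$, then for each $\sigma \in MORE_{i^{*},\,*}$ we have $m_{A}(\sigma) \geq m_{R_{*}}(\sigma) > m_{R_{i^{*}}}(\sigma)$, so global fairness applied to $SymbolCount$ eventually lifts some agent's $\emph{count}$ on $\sigma$ to $m_{R_{*}}(\sigma)-1$, activating the corresponding $INDICATOR$ bit; bitwise-OR propagation in $WrongOutput?$ then eventually fills $\varA{HAS-MORE}_{i^{*},\,*}$ with ones, forcing a further reset and contradicting the assumed stabilization of $\emph{root}$.

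The main obstacle is the first half, because arbitrary initial junk in $\varA{HAS-MORE}$ and $\emph{count}$ can both trigger spurious resets and, via bitwise-OR across interactions, propagate bogus bits between agents. My plan is a cleansing argument over the finite global state space: a reset by an agent purges its own $\emph{count}$ and $\varA{HAS-MORE}$, and global fairness ensures that if any reset is enabled infinitely often then every agent resets infinitely often and every pair interacts infinitely often. Using these observations together with the finite state space, I would argue that after finitely many resets every bit present in any $\varA{HAS-MORE}$ must be traceable to an $INDICATOR$ activation in the current regime, so no further bogus increments can occur; this reduces to the second half above and pins $\emph{root}$ at an index whose root has output $f(A)$.
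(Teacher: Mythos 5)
Your necessity argument is essentially the paper's: isolate, after convergence to $f(B)$, a subpopulation whose inputs form $A$, and let self-stabilization re-converge it to $f(A) \neq f(B)$, contradicting convergence. The second half of your sufficiency argument is also the paper's core mechanism: if an agent's $root$ points to $R_i$ with $f(R_i)\neq f(A)$, then for the root $R_j \subseteq A$ the symbols of $MORE_{i,j}$ are present in sufficient number, counts rise, the indicator bits propagate by OR, the row fills with ones, and the agent increments and resets. The genuine gap is in your ``first half'' (the cleansing/stabilization step), which is exactly where the paper has to work hardest. First, the claim that ``if any reset is enabled infinitely often then every agent resets infinitely often'' is not something global fairness gives you: an agent that happens to hold a correct root may never reset, and the paper's final argument is devoted precisely to this case (its $NoReset$ versus $RootIncrementor$ argument: if convergence fails, some agent keeps cycling through all root indices, eventually shares the non-resetting agent's index, and the very interactions that fill the cycler's freshly reset row would also fill the non-resetting agent's row, so by global fairness that agent must reset after all, a contradiction). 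Second, ``every bit traceable to an INDICATOR activation in the current regime'' does not make the bit truthful: the indicator is computed from $count$, and an agent that has never reset carries an arbitrary, possibly overestimating $count$ that keeps generating legitimate-looking activations forever. Soundness requires the converse counting bound --- once all agents have reset, an agent with input $\sigma$ and $count = n$ certifies at least $n+1$ agents with input $\sigma$ --- which the paper proves by a separate and rather delicate invariant argument (tracking an ``inside/outside'' set of agents and the distribution of their counts), and which your sketch does not supply or replace.

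Two smaller inaccuracies: you should not expect the $root$ values to freeze at a \emph{common} index $i^{*}$; distinct agents may stabilize at distinct roots $R_i, R_{i'} \subseteq A$, which is harmless since both map to $f(A)$, and the convergence claim should be made per agent. Also, even the upward counting step (some agent with input $\sigma$ reaching $count \geq n-1$) needs an argument that resets can interrupt the count only finitely often while same-input interactions are scheduled; the paper proves this by noting that a reset triggered by $\sigma$ moves the agent to a root index against which $\sigma$ can no longer trigger a reset. As it stands, your outline identifies the right obstacle but does not contain the two ingredients (the reset-count lower bound and the treatment of never-resetting agents) that the proof actually needs.
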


\section{Conclusions and Generalizations}
A function $f: \mathcal{X} \rightarrow Y$ in the basic computational model of population protocols is self-stabilizing if and only if for any multisets $A$ and $B$ in the domain, $A \subseteq B \implies f(A) = f(B)$. The principle insight yielding the forward implication is that we cannot have the scheduler isolate a subpopulation and have that subpopulation restabilize to another output. The converse holds because we only need to parse the root set of the domain and find a root that is a subset of the population to determine what the output is, as our protocol does.

The notion of a root set should be applicable to arbitrary interaction graphs as well. Angluin et. al \cite{Angluin} demonstrated that leader election in \textit{non-simple} classes of graphs are impossible, which is effectively applying the idea that different subgraphs may converge on different answers. We can view input assignments as interaction graphs where the nodes are the input symbols; if there exists a subgraph of an input assignment that maps to a different output under $f$, then $f$ could not admit a self-stabilizing protocol. We believe that the converse should hold as well. If we consider the class of all graphs of input assignments, we can use the subgraph relation $\subseteq$ as our partial order and find its corresponding minimal elements. Unfortunately there are infinite minimal elements in the class of rings, leading to an infinite root set. Perhaps taking a quotient on the domain of possible input assignments (e.g. calling all rings equivalent) may lead to a finite root set, though this is a subject for future research, as well as on characterizing self-stabilization in general population protocols and other distributed models.

\section*{Acknowledgments}
 We thank the anonymous reviewers for their helpful comments. This work is supported in part by DARPA under Cooperative Agreement No: HR0011-20-2-0025, NSF-BSF Grant 1619348, US-Israel BSF grant 2012366, \linebreak Google Faculty Award, JP Morgan Faculty Award, IBM Faculty Research Award, Xerox Faculty Research Award, OKAWA Foundation Research Award, B. John Garrick Foundation Award, Teradata Research Award, and Lockheed-Martin Corporation Research Award. The views and conclusions contained herein are those of the authors and should not be interpreted as necessarily representing the official policies, either expressed or implied, of DARPA, the Department of Defense, or the U.S. Government. The U.S. Government is authorized to reproduce and distribute reprints for governmental purposes not withstanding any copyright annotation therein.


\bibliographystyle{splncs04}
\bibliography{mylib}

\ifthenelse{\boolean{isFullVersion}}{

\newpage
\appendix
\section{Number of Self-Stabilizing Functions vs Computable Functions - Brief Discussion} \label{a:numfun}
Recall our corollary that the for a fixed domain $\mathcal{X}$ and fixed output alphabet $Y$, the number of self-stabilizing functions $f: \mathcal{X} \rightarrow Y$ is finite.

\begin{corollary*}
    Let $\mathcal{X}$ be a set of finite multisets over a finite alphabet, let $Y$ be a finite output alphabet, and let $\mathcal{R}$ be the minimally sized root set of $\mathcal{X}$. The number of self-stabilizing functions $f: \mathcal{X} \rightarrow Y$ is finite. Specifically there are at most $|\mathcal{R}|^{|Y|}$ self-stabilizing functions. There are exactly this number of self-stabilizing functions if and only if every $A \in \mathcal{X}$ has a unique root $R \in \mathcal{R}$.
\end{corollary*}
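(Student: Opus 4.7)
The plan is to combine the main theorem (characterizing self-stabilization by the subset-implies-equal-output condition) with the defining property of the root set to reduce the counting problem to a question about assignments $g : \mathcal{R} \to Y$. First I would observe that, by the main theorem, every self-stabilizing $f : \mathcal{X} \to Y$ satisfies $A \subseteq B \Rightarrow f(A) = f(B)$. Since every $A \in \mathcal{X}$ contains at least one root $R \in \mathcal{R}$, this forces $f(A) = f(R)$. Hence $f$ is completely determined by its restriction $f|_\mathcal{R}$, giving an injection from self-stabilizing functions into $Y^\mathcal{R}$, and therefore the upper bound $|Y|^{|\mathcal{R}|}$.

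For the equality case I would prove both directions of the iff. For the easy direction, assume every $A \in \mathcal{X}$ has a unique root $R(A) \in \mathcal{R}$. Given any assignment $g : \mathcal{R} \to Y$, define $f(A) := g(R(A))$. To verify $f$ is self-stabilizing it suffices to check $A \subseteq B \Rightarrow f(A) = f(B)$: but $R(A) \subseteq A \subseteq B$ shows $R(A)$ is a root contained in $B$, and by uniqueness $R(A) = R(B)$, so $f(A) = g(R(A)) = g(R(B)) = f(B)$. Thus every $g \in Y^\mathcal{R}$ extends to a distinct self-stabilizing $f$, yielding exactly $|Y|^{|\mathcal{R}|}$ such functions.

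For the other direction, suppose some $A \in \mathcal{X}$ has two distinct roots $R, R' \in \mathcal{R}$. Then any self-stabilizing $f$ is forced to satisfy $f(R) = f(A) = f(R')$, so only those $g \in Y^\mathcal{R}$ with $g(R) = g(R')$ can arise as restrictions. When $|Y| \geq 2$ this is a strictly smaller set (its cardinality is $|Y|^{|\mathcal{R}|-1} \cdot 1 < |Y|^{|\mathcal{R}|}$), contradicting the equality; the degenerate case $|Y| = 1$ is handled trivially since both sides of the count equal $1$ and every $A$ vacuously has a unique root whenever $\mathcal{R}$ is nonempty. I expect the main point requiring care is the forward direction's well-definedness check — verifying that every map $g : \mathcal{R} \to Y$ really does extend to a \emph{self-stabilizing} $f$ under the uniqueness hypothesis — since this is where the hypothesis on unique roots is essential and where one might be tempted to gloss over the verification of the subset-monotonicity condition.
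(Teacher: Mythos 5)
Your argument is essentially the paper's own: the paper proves this corollary via the short discussion in Section 1.2 --- by the main theorem a self-stabilizing $f$ satisfies $f(A)=f(R)$ for any root $R\subseteq A$, so $f$ is determined by its restriction to $\mathcal{R}$, giving at most $|Y|^{|\mathcal{R}|}$ functions, and equality holds iff roots are unique, since two distinct roots of a common $A$ are forced to share an output while unique roots leave all $|Y|$ choices free for each root. (You use the correct bound $|Y|^{|\mathcal{R}|}$; the exponent order $|\mathcal{R}|^{|Y|}$ in the appendix statement is a typo relative to the body text.) If anything, your write-up is more careful than the paper's sketch: you verify that under the uniqueness hypothesis every $g:\mathcal{R}\to Y$ extends to a function satisfying the subset-monotonicity condition (hence self-stabilizing by the main theorem), and that distinct $g$ yield distinct $f$, which implicitly uses that each root is its own unique root.

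One genuine slip: your handling of $|Y|=1$. The claim that ``every $A$ vacuously has a unique root whenever $\mathcal{R}$ is nonempty'' is false --- uniqueness of roots is a property of $\mathcal{X}$ and $\mathcal{R}$ alone and has nothing to do with $Y$. Indeed, if $|Y|=1$ and some $A\in\mathcal{X}$ has two distinct roots, the number of self-stabilizing functions is $1=|Y|^{|\mathcal{R}|}$ even though roots are not unique, so the ``only if'' direction of the equivalence genuinely requires $|Y|\geq 2$. This degenerate case is also silently ignored by the paper's statement and proof sketch; the correct fix is to state the equivalence under the assumption $|Y|\geq 2$ (your argument then covers both directions), not to assert uniqueness of roots in the one-output case.
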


On the other hand, when $\mathcal{X}$ is all finite multisets on a finite alphabet, we know that the set of computable predicates are exactly the set of semilinear predicates, of which there are (countably) infinitely many. If $f: \mathcal{X} \rightarrow \{0,1\}$ is a computable predicate, the predicate $g(A) = f(A)$ on some restricted domain $\mathcal{X}' \subseteq \mathcal{X}$ will also be a computable predicate (by just running the same protocol for $f$ on the restricted domain). There are cases where some computable predicates $f$ and $h$ are not equivalent on $\mathcal{X}$ but are equivalent on $\mathcal{X}'$ (e.g. $f$ outputs 1 if there are only $a$ symbols, and $h$ always outputs 1). However if $\mathcal{X}'$ only removes finitely many inputs from $\mathcal{X}$, then we can guarantee that $f$ and $h$ continue to not be equivalent on $\mathcal{X}'$ (this is because the set of accepted/rejected inputs can both be written as finite unions of linear sets, and one can show that the set of inputs that $f$ and $h$ disagree on is infinite). Thus we have that for (countably) infinite number of subsets $\mathcal{X}' \subseteq \mathcal{X}$, in which only a finite number of inputs are removed, there are a (countably) infinite number of computable predicates. This argument is \textit{not} a proof that there is a gap between the number of self-stabilizing predicates and number of computable predicates, but merely helps corroborate the intuition that there is one.

\section{Self-Stabilization Impossibility from Angluin} \label{a:impossibility}
Angluin et. al's impossibility result for self-stabilization informally asserts that if a self-stabilizing protocol converges in a population, any subgraph of the population should converge to the same answer had it run under the same protocol. If not, the scheduler could isolate that subgraph after the convergence of the population, causing agents in the subgraph to self-stabilize and change their output, thereby contradicting convergence.

This proof technique is known, but we include it below for sake of completeness.
\begin{theorem}
    \label{T:Subgraph}
    Let $\mathcal{P} = (Q, \Sigma, Y, I, O, \delta)$ be a self-stabilizing population protocol computing some function $f: \mathcal{X} \rightarrow Y$, where $\mathcal{X}$ is a set of multisets over finite alphabet $\Sigma$. Then $$\forall A,B \in \mathcal{X}, A \subseteq B \implies f(A) = f(B)$$
\end{theorem}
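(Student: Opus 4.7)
The plan is to argue by contradiction. Assume $\mathcal{P}$ is a self-stabilizing protocol for $f$ yet there exist $A, B \in \mathcal{X}$ with $A \subsetneq B$ and $f(A) \neq f(B)$. Fix a population $V = V_A \sqcup V_C$ of $|B|$ agents with $|V_A| = |A|$, and assign the symbols of $A$ to $V_A$ and the symbols of $B \setminus A$ to $V_C$, so that $\mathcal{P}$ is running on input $B$. The goal is to construct a globally fair execution of $\mathcal{P}$ on $V$ in which configurations with at least one agent outputting $f(A) \neq f(B)$ recur infinitely often; such an execution cannot satisfy the definition of convergence to $f(B)$, contradicting the fact that $\mathcal{P}$ computes $f$.

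The engine is a lemma extracted from the self-stabilization of $\mathcal{P}$ on input $A$: for any $V_A$-configuration $C_A$, there is a finite sequence of $V_A$-internal interactions leading to a configuration $C_A^\star$ in which every $V_A$-agent outputs $f(A)$ and this property is preserved under all further $V_A$-internal interactions. To see this, apply the definition of self-stabilization to any globally fair $V_A$-only execution extending $C_A$; convergence forces some finite prefix to reach such a $C_A^\star$. Given this lemma, I would build the desired execution of $\mathcal{P}$ on $V$ in macro-rounds. Fix an enumeration $(D_1, D_1'), \ldots, (D_N, D_N')$ of all pairs of $V$-configurations with $D_i \to D_i'$ (finite because $Q$ and $V$ are finite). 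In macro-round $k$: first, perform $V_A$-only interactions until the system reaches a $V_A$-stable $f(A)$-configuration; second, perform a single fairness move by choosing the smallest $i$ with $1 \le i \le \min(k, N)$ for which the current configuration equals $D_i$ and scheduling $D_i \to D_i'$, defaulting to any enabled interaction if no such $i$ exists. At the end of the first half of every macro-round every $V_A$-agent outputs $f(A) \neq f(B)$, so the execution visits non-$f(B)$-outputting configurations infinitely often and therefore does not converge to $f(B)$.

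The principal obstacle is verifying that the resulting execution is globally fair. Each pair $(D_i, D_i')$ must be serviced infinitely often whenever $D_i$ itself recurs infinitely often along the execution, and the delicate case is when $D_i$ arises inside the $V_A$-only stabilization phase rather than at the boundary of a fairness move. I would close this gap by strengthening the second phase into a debt-clearing step: maintain for each pair $(D_i, D_i')$ a counter of unserviced visits to $D_i$, and let the fairness move schedule $D_i \to D_i'$ for the smallest $i$ with positive debt whose triggering configuration $D_i$ is reachable from the current configuration along a short path. A combinatorial argument leveraging the finiteness of the configuration space to bound the waiting time between debt repayments shows that every pair with $D_i$ recurrent is serviced infinitely often, establishing global fairness and completing the contradiction.
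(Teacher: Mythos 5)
Your starting point is the same as the paper's: place $B$ on $|B|$ agents, and use self-stabilization on input $A$ to extract the key fact that from any configuration a \emph{finite} sequence of interactions internal to the agents carrying $A$ drives all of them to output $f(A)$, which is incompatible with convergence to $f(B)$. Where you diverge is in how the contradiction is reached, and that is where your argument has a genuine gap. You attempt to hand-build a globally fair execution that visits non-$f(B)$ configurations infinitely often, and you correctly flag fairness of the constructed schedule as the principal obstacle, but you do not overcome it. Your debt-clearing rule services an edge $(D_i,D_i')$ only at designated moments and only if $D_i$ is then reachable from the current configuration; the configuration digraph need not be strongly connected, so $D_i$ may recur infinitely often only as an intermediate configuration inside your $V_A$-only stabilization phases (where you never service it), while at every servicing moment $D_i$ is unreachable. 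Then $D_i$ occurs infinitely often but the step $D_i \to D_i'$ is never taken, and the execution is not globally fair. The ``combinatorial argument bounding the waiting time between debt repayments'' is exactly the missing step, and it is not a routine consequence of finiteness. (A smaller issue: the clause that $C_A^\star$ remains all-$f(A)$ under \emph{all} further $V_A$-internal interactions needs more than the convergence of one chosen execution; fortunately you never actually need that preservation, since a single visit to a configuration with a wrongly-outputting agent after convergence already contradicts convergence.)

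The paper's proof shows that no construction is needed: since $\mathcal{P}$ computes $f$, \emph{every} globally fair execution on input $B$ converges to $f(B)$; fix one, and after convergence some configuration $C$ occurs infinitely often (there are finitely many configurations). Your own lemma gives a finite sequence of $V_A$-internal interactions from $C$ ending in a configuration $C'$ in which all agents carrying $A$ output $f(A)$. Applying the global fairness condition once per step of that finite sequence shows $C'$ itself occurs (indeed infinitely often) in this very execution, contradicting the requirement that every agent output $f(B)$ at all times after convergence. In other words, fairness is used as a hypothesis about an arbitrary execution rather than a property you must engineer. If you want to keep your constructive route, the clean repair is to note that, with finitely many configurations, a globally fair execution is one that eventually remains inside a successor-closed set $S$ of configurations and visits every element of $S$ infinitely often; taking $S$ to be a terminal strongly connected component reachable from the start, successor-closure plus your lemma shows $S$ contains a configuration with the $A$-carrying agents outputting $f(A)$, and cycling through $S$ forever yields the desired fair, non-converging execution. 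As written, however, the fairness verification remains an open hole in your proof.
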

\begin{proof}
    Suppose $f(A) \neq f(B)$ for some $A, B \in \mathcal{X}$ where $A \subseteq B$. We run protocol $\mathcal{P}$ on an arbitrary configuration of $|B|$ agents where each agent is given an input symbol from $B$. Since $\mathcal{P}$ is self-stabilizing and computes $f$, it will eventually converge to the output $f(B)$.
    
    Since the scheduler can act arbitrarily for a finite amount of time (by global fairness), after convergence the scheduler can isolate members\footnote{A globally fair scheduler can isolate any subset of the population. If $C$ is an infinitely occurring configuration, let $C'$ be the configuration that follows in which finitely many interactions occur only within that subpopulation. Then $C \rightarrow C'$ and so $C'$ occurs infinitely often.} of the population whose input symbols together form multiset $A$. Since this protocol is self-stabilizing and $A \in \mathcal{X}$, the protocol will eventually have this population self-stabilize and converge to $f(A)$. However this is a contradiction, since by the definition of convergence, all agents must continue to output $f(B)$ for all time after converging and yet this subpopulation is now outputting $f(A) \neq f(B)$. Thus by contradiction, it must be that $f(A) = f(B).$ 
\end{proof}

The principle notion is that we must disallow the possibility of a subpopulation converging to a different answer than what the entire population converged to. In a setting where only certain ordered pairs of agents can interact, which interactions are allowed could be described by an interaction graph on the agents. In this setting we still cannot have a subgraph converge in isolation to a different answer. See Conclusion and Generalizations for more discussion.

\section{Proofs of Correctness} \label{a:correct}
We now prove that \textit{SS-Protocol} is a self-stabilizing population protocol for $f: \mathcal{X} \rightarrow Y$ where for any $A,B \in \mathcal{X}$, $A \subseteq B \implies f(A) = f(B)$. To begin, we want to argue that if there are $n$ agents with the same input, then the counting sub-protocol will achieve a $count$ of at least $n - 1$ (where we subtract 1 due to 0-indexing). This will be useful because we need $count$ to be sufficiently high enough so that we can identify when to increment $root$. This lemma is almost immediate, but there is some subtle nuance since the counting mechanism may reset before becoming sufficiently large enough.

\begin{lemma} \label{l:countbound}
    Let $f: \mathcal{X} \rightarrow Y$ be a function over finite multisets on a finite alphabet, and let $\mathcal{R} = \{R_0, R_1, \hdots, R_{|\mathcal{R}| - 1} \}$ be a finite and minimally sized root set of $\mathcal{X}$. Let $A \in \mathcal{X}$ be the multiset whose elements are dispersed amongst the agents. Let $n \leq M$, where $M$ is the maximum multiplicity of any symbol in the root set. Let $C$ be a configuration in \textit{SS-Protocol} where there are at least $n$ agents with input $\sigma$. There is some configuration $C'$, with $C \xrightarrow{*} C'$, where one of these agents has $SymbolCount$ state $\emph{count} \geq n - 1$.
\end{lemma}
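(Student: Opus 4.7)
The plan is to exhibit an explicit finite interaction sequence from $C$, with the scheduler restricting itself to the $n$ agents $u_1, \ldots, u_n$ that carry input $\sigma$; this is legitimate since we only need to reach one $C'$ and any finite prefix of a schedule is unconstrained. Under this restriction, only the $SymbolCount$ rule touches $\emph{count}$: a matched interaction advances the responder's count by $1 \bmod M$, and the bound $n \leq M$ ensures such an increment in the range $\{0, \ldots, n-1\}$ never wraps. The core combinatorial step is a Pigeonhole argument: if every $u_k$ has $\emph{count} \leq n-2$, the $n$ values all lie in an $(n-1)$-element set, so two of them coincide, and scheduling them to interact strictly raises the sum of counts. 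Absent the composition's reset rule, iterating this would force some count to reach $n-1$ after a bounded number of steps.

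The main obstacle, the "subtle nuance" the authors flag, is precisely the reset rule, which zeroes $\emph{count}$ and $\varA{HAS-MORE}$ whenever the responder's $\varA{HAS-MORE}_{\emph{root}, j}$ becomes an all-ones row. I would first observe that, because every interacting agent has input $\sigma$, the $INDICATOR$ contributions can set only the $\sigma$-coordinate of $\varA{HAS-MORE}_{\emph{root}, j}$, so starting from a freshly reset (hence $0$) table a row can fill to all ones only when $MORE_{\emph{root}, j} = \{\sigma\}$. Using the hypothesis $A \subseteq B \Rightarrow f(A) = f(B)$, I would then prove that the root $r_A$ of the input $A$ is \emph{safe}, meaning no $j$ satisfies both $MORE_{r_A, j} = \{\sigma\}$ and $m_{R_j}(\sigma) \leq n$: such a $j$ would, by a symbol-by-symbol multiplicity comparison (using $R_{r_A} \subseteq A$ and $m_A(\sigma) \geq n$), force $R_j \subseteq A$ and hence $f(R_j) = f(A) = f(R_{r_A})$, contradicting the non-emptiness of $MORE_{r_A, j}$. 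Consequently, once an agent sits at $\emph{root} = r_A$ with a cleared $\varA{HAS-MORE}$, Pigeonhole can walk its $\emph{count}$ all the way up to $n-1$ without firing a reset.

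The argument finishes in two phases. Phase 1 routes each $u_k$ into $\emph{root} = r_A$: if $u_k$'s current $\emph{root} = r$ is unsafe, Pigeonhole pushes its count to the witnessing threshold $T(r) - 1 = \min\{\, m_{R_j}(\sigma) : MORE_{r, j} = \{\sigma\}\,\} - 1$, which sets the lone $\sigma$-bit of $\varA{HAS-MORE}_{r, j}$ and triggers a reset on the next interaction that advances $\emph{root}$ by $1 \bmod |\mathcal{R}|$; since $|\mathcal{R}|$ is finite, after at most $|\mathcal{R}|$ such resets each agent lands at $r_A$ with a cleared table. Phase 2 then runs Pigeonhole on the $n$ agents now anchored at $r_A$, where no responder's count update can trigger a reset, and so some $u_k$ reaches $\emph{count} \geq n-1$, yielding the desired $C'$. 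The step I expect to require the most care is Phase 1, specifically scheduling the Pigeonhole interactions so that routing one agent through its unsafe threshold does not undo agents already parked at $r_A$. I would handle this by processing the $u_k$ sequentially and, when advancing a given agent, restricting the responder role to that agent, so that parked agents experience only bitwise-OR updates to $\varA{HAS-MORE}$ at $\emph{root} = r_A$, updates which, starting from a cleared table and under input $\sigma$, provably cannot fill any row.
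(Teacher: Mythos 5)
Your Phase~2 and the ``safety'' of the root $r_A$ are fine (and in fact echo the paper's later convergence lemma), but Phase~1 has a genuine gap in the scheduling mechanism. The $SymbolCount$ rule only increments the responder when the two agents have \emph{equal} counts, so you cannot drive a designated agent $u_k$ up to a threshold $T(r)-1$ while ``restricting the responder role to that agent'': with every other agent's count frozen at its arbitrary initial value, $u_k$ can climb from $c$ to $c+1$ only if some frozen partner already sits at exactly $c$, and in general (e.g.\ all other agents at count $0$) it stalls after a single increment. If you instead let the other agents' counts move, as the Pigeonhole step requires, then those agents may themselves sit at unsafe roots with low thresholds, set bits, increment $\emph{root}$ and reset mid-walk, zeroing their counts; your proposal gives no argument that the sum-of-counts potential still eventually pushes someone to the threshold. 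That is precisely the ``subtle nuance'' the lemma is about, and it is the step your plan leaves unproved. (A smaller, fixable slip: an agent being routed may get stuck at a walk-safe root different from $r_A$, since no $\sigma$-only interaction can ever advance it past such a root; your argument should park agents at \emph{some} walk-safe root rather than insist on $r_A$, which is harmless since your safety argument applies verbatim there.)

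The paper closes this gap differently and more simply: it schedules same-count interactions among the $\sigma$-agents without designating who climbs, and shows that under this restricted schedule each agent can reset only finitely often, because after a reset the table is cleared and any further reset requires some entry $\varA{HAS-MORE}_{i,j}$ with $MORE_{i,j}=\{\sigma\}$, i.e.\ a root $R_j$ with strictly more $\sigma$'s than the current one; cycling through $\emph{root}$ values, the agent reaches a root of maximal $\sigma$-multiplicity where no such entry exists and stops resetting, after which plain Pigeonhole drives some count to $n-1$. Notably this argument never uses the hypothesis $A\subseteq B \Rightarrow f(A)=f(B)$ or the root $r_A\subseteq A$; your use of them is not wrong, but the extra structure does not rescue the sequential ``freeze the others'' schedule. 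To repair your proof you would either have to import the paper's finitely-many-resets argument (making Phase~1 redundant) or give a genuinely different termination argument for the interleaved climbs and resets.
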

\begin{proof}
    Consider $n$ of these agents with input $\sigma$ in some configuration $C$, and suppose all of these agents have $0 \leq count < n - 1$. By the Pigeonhole Principle there will be at least two agents with the same $count$, and so it is valid to keep scheduling interactions between agents with the same $count$ while no agent has $count \geq n - 1$. If none of these agents reset when interacting with each other, then it follows that interactions between agents with duplicate values of $count$ will eventually raise some agent's $count$ to be at least $n - 1$ (the subtraction by 1 due to 0-indexing). If an agent does reset when only interacting with agents with the same input, it can only do so finitely many times. Suppose some agent with $root = i$ resets during these interactions; it must be because the input $\sigma$ has a corresponding bit in some $\varA{HAS-MORE}_{i,j}$ entry, and hence $\sigma \in MORE_{i,j}$. Either the agent stops resetting, or it keeps resetting until $root$ increments to $root = j$; then since $MORE_{i,j}$ is the complement of $MORE_{j,i}$, $\sigma$ won't have a corresponding bit in $\varA{HAS-MORE}_{j,i}$. Thus continued interactions with agents with input $\sigma$ could not lead to another reset. Consequently all agents with input $\sigma$ must eventually stop resetting, and so the agents with duplicate values of $count$ will eventually drive some agent's $count$ to be at least $n - 1$. This final configuration $C'$, where some agent with input $\sigma$ has $count \geq n - 1$, follows from $C$.
\end{proof}

The next lemma allows us to argue that any agent that is outputting the wrong answer (due to an incorrect choice of $root$) will eventually identify this, ultimately leading to $root$ being incremented. The proof idea is that the input multiset $A$ must have a root $R_j \subseteq A$, and the agents with symbols from $R_j$ will have $count$ sufficiently high enough to make the agent increment $root$ and reset. A corollary to this is that eventually such an agent will start outputting the right answer as it keeps iterating through the root set. It's also crucial to note that every time $root$ increments, the agent enters a reset state. 

\begin{lemma} \label{l:iterate}
    Let $f: \mathcal{X} \rightarrow Y$ be a function over finite multisets on a finite alphabet, and let $\mathcal{R} = \{R_0, R_1, \hdots, R_{|\mathcal{R}| - 1} \}$ be a finite and minimally sized root set of $\mathcal{X}$. Let $A \in \mathcal{X}$ be the multiset whose elements are dispersed amongst the agents. Consider some execution of \textit{SS-Protocol}. Suppose there is an infinitely occurring configuration in which an agent has $RootCount$ state $\emph{root} = i$ such that $f(R_i) \neq f(A)$. Then there is an infinitely occurring configuration where this agent has $\emph{root} = i + 1 \mod |\mathcal{R}|$ and is in a reset state.
\end{lemma}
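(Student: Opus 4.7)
The plan is to exploit the infinite recurrence of $C$ by constructing an execution path $C \xrightarrow{*} C'$ to a configuration $C'$ in which the tracked agent (the one with $\emph{root} = i$ in $C$) has incremented to $\emph{root} = i+1 \bmod |\mathcal{R}|$ and has been reset. Global fairness, applied inductively along this path, then lifts reachability to infinite recurrence of $C'$.

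First I would fix a witness root $R_j \in \mathcal{R}$ with $R_j \subseteq A$, which exists because $\mathcal{R}$ is a root set and $A \in \mathcal{X}$. Since $f(R_j) = f(A) \neq f(R_i)$, we have $R_i \neq R_j$, and the strong downwards antichain corollary forces $R_i$ and $R_j$ to be incomparable under $\subseteq$. In particular $MORE_{i,j}$ is nonempty. For each $\sigma \in MORE_{i,j}$, the inclusion $R_j \subseteq A$ gives $m_A(\sigma) \geq m_{R_j}(\sigma) \leq M$, so Lemma \ref{l:countbound} applies with $n = m_{R_j}(\sigma)$.

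Next I would build the schedule from $C$ in phases, one per $\sigma_k \in MORE_{i,j}$. In phase $k$ I invoke Lemma \ref{l:countbound} on the sub-population of agents with input $\sigma_k$ to drive some such agent's $\emph{count}$ up to at least $m_{R_j}(\sigma_k) - 1$; that sub-schedule only involves $\sigma_k$-agents, so the states of all other agents, and in particular any bits already set on the tracked agent's $\varA{HAS-MORE}$, remain undisturbed. Then I schedule a single interaction between this high-count agent and the tracked agent: the $\varA{WrongOutput?}$ transition OR's the tracked agent's $\varA{HAS-MORE}$ with $INDICATOR(\emph{count}, \sigma_k)$, which sets the $k$-th bit of $\varA{HAS-MORE}_{i,j}$ to $1$. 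Monotonicity of bitwise OR preserves all bits set in earlier phases.

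After $|MORE_{i,j}|$ such phases the tracked agent's $\varA{HAS-MORE}_{i,j}$ is the all-ones string, so the composition's reset rule fires at the triggering transition: $\emph{root}$ advances from $i$ to $i+1 \bmod |\mathcal{R}|$ and both $\emph{count}$ and $\varA{HAS-MORE}$ are zeroed, placing the agent in the claimed reset state. This is our $C'$, and $C \xrightarrow{*} C'$; iterating global fairness yields that $C'$ recurs infinitely often. The main obstacle is verifying that the orchestration never derails: I must check that raising $\sigma_k$-counts cannot accidentally reset the tracked agent mid-phase (it cannot, since those interactions are confined to $\sigma_k$-agents distinct from the tracked agent, or, in the degenerate case where the tracked agent itself has input in $MORE_{i,j}$, the indicator then acts directly on the tracked agent without any harm). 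Even a premature reset caused by some other index $j'$ whose bitstring happens to fill up during the schedule is fine, because every increment is by exactly $1 \bmod |\mathcal{R}|$ and lands on a reset state, which is precisely the lemma's conclusion.
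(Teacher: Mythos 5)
Your proposal is correct and follows essentially the same route as the paper: fix a root $R_j \subseteq A$, observe $MORE_{i,j} \neq \emptyset$ from minimality (antichain property) of the root set, drive up the relevant counts via Lemma \ref{l:countbound}, let the tracked agent absorb all bits of $\varA{HAS-MORE}_{i,j}$ so that its $\emph{root}$ increments and it resets, and lift reachability to infinite recurrence by global fairness. One minor mechanical point: in the $WrongOutput?$ transition each agent ORs in only its \emph{own} $INDICATOR$, so the high-count agent must first record its bit in its own table before the tracked agent can acquire it through the table OR (one extra interaction, exactly as the paper schedules it) --- this does not affect your argument.
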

\begin{proof}
    For sake of notational convenience, we will let $root = i$ throughout this proof. Suppose there is some infinitely occurring configuration with an agent with $RootOutput$ state $i$ such that $f(R_i) \neq f(A)$; that is, the agent is outputting the wrong answer. As $f(R_i)$ is not the right output, it must be that $R_i \not \subseteq A$. By definition of the root set there exists some different root $R_j$ such that $R_j \subseteq A$, which in turn implies $f(R_j) = f(A).$ Therefore $f(R_i) \neq f(R_j)$, so $MORE_{i,j}$ is the set of symbols occurring more often in $R_j$ than in $R_i$. Note that $MORE_{i,j}$ must be nonempty since otherwise $R_j \subseteq R_i$, which contradicts the minimality of the root set since the root set would be smaller if we excluded $R_i$. For each $\sigma \in MORE_{i,j}$, there will be at least as many agents with input $\sigma$ as there are occurrences in $R_j$, since $R_j \subseteq A$. In particular there will be at least $m_{R_j}(\sigma)$ agents with input $\sigma$. By Lemma \ref{l:countbound} and global fairness, there is an infinitely occurring configuration where one of these agents has $count \geq m_{R_j}(\sigma) - 1$. This agent will hence have the component corresponding to $\sigma$ in $\varA{HAS-MORE}_{i,j}$ set to 1. There will be an agent like this for each $\sigma \in MORE_{i,j}$; after our original agent interacts with each of them and bitwise OR's her $\varA{HAS-MORE}$ table with them, it would set her $\varA{HAS-MORE}_{i,j}$ bitstring entry to $1^{|MORE_{i,j}|}$. The original agent would increment her $RootCount$ state to $root = i + 1 \mod |\mathcal{R}|$, which would reset the agent. This resultant configuration follows from an infinite configuration, and so by global fairness it also must be infinitely occurring.
\end{proof}

Once an agent with input $\sigma$ resets, we want their $count$ to reflect a lower bound of the number of agents with input $\sigma$. Specifically, suppose all agents have reset at least once. Then if an agent with input $\sigma$ has $count = n$, we want it to be the case that there are at least $n + 1$ agents with input $\sigma$ in the population. Though this fact is intuitive, it turns out to be a little cumbersome to prove. We leave the details to Appendix \ref{a:countlowerbound}.
\begin{lemma} \label{l:countlowerbound}
    Let $f: \mathcal{X} \rightarrow Y$ be a function over finite multisets on a finite alphabet. Let $A \in \mathcal{X}$ be the multiset whose elements are dispersed amongst the agents. Suppose that there is an infinitely occurring configuration where all agents have reset at least once. If an agent with input $\sigma$ has $count = n$, then there must be at least $n + 1$ agents with input $\sigma$ in the population.
\end{lemma}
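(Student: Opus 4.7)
The plan is to argue by strong induction on $n$ that any $\sigma$-agent with current count $n$ forces the existence of at least $n+1$ distinct $\sigma$-agents. We will associate to each $\sigma$-agent $w$ and each count value $c$ that $w$ has attained since its last reset a pedigree set $P_w(c)$ of $\sigma$-agents involved in building $w$'s count to $c$, defined recursively by $P_w(0)=\{w\}$ and $P_w(c) = P_w(c-1) \cup P_{w'}(c-1)$ (both evaluated at time $\tau^-$) when $w$ reached count $c$ at time $\tau$ by interacting with partner $w'$. The lemma will follow from the structural claim $|P_w(c)| \geq c+1$ applied to $w=v$ and $c=n$.

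The base case $c=0$ is immediate. In the inductive step the induction hypothesis gives $|P_v(n-1)|, |P_u(n-1)| \geq n$ where $u$ is $v$'s partner at the $n$-th post-reset increment, yielding only the weak bound $|P_v(n)| \geq n$. To upgrade this to $n+1$, we will prove a strengthened invariant: any two distinct $\sigma$-agents $w_1, w_2$ that are simultaneously at count $c$ (each post its own last reset) satisfy $|P_{w_1}(c) \cup P_{w_2}(c)| \geq c+2$. This suffices to drive the induction for the weaker claim, and we will establish it by a nested induction using an order-of-events argument: since only the responder increments per interaction, $v$ and $u$ cannot have simultaneously transitioned from count $c-1$ to $c$, so without loss of generality one did so strictly earlier, and its partner at that moment contributes a witness to the union of pedigrees beyond what the other's pedigree captures alone.

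Two secondary technicalities must be handled. First, the partner $u$ might not itself have reset before being used in $v$'s increment, in which case its pre-reset pedigree is not well-defined; we will handle this by augmenting $P_v(c)$ with $\{u\}$ and using the hypothesis that every $\sigma$-agent has reset at least once to argue $u$ is a distinct $\sigma$-agent not yet counted. Second, wrap-around of the counter modulo $M$ means an observed count $n$ may correspond to $n + jM$ true increments, but since $M \geq n+1$ the pedigree inductively supplies at least $n + jM + 1 \geq n+1$ distinct $\sigma$-agents. The main obstacle will be the overlap case in the strengthened invariant; the key combinatorial fact, enforced by the asymmetric responder-only increment rule together with the property that a reset erases an agent's count history, is that the two post-reset histories must diverge enough to supply a fresh witness beyond whichever agents they share.
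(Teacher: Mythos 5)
There is a genuine gap: the two steps on which your whole argument rests are asserted rather than proved, and the justification you sketch for them does not work as stated. The crux is maintenance of the strengthened invariant ($|P_{w_1}(c)\cup P_{w_2}(c)|\geq c+2$ for two agents simultaneously at count $c$). Your order-of-events argument says that the partner at the later of the two increments ``contributes a witness beyond what the other's pedigree captures alone,'' but nothing you say rules out that this partner --- and indeed its entire pedigree --- already lies inside the other agent's pedigree (the partner may have interacted with both agents earlier, at lower counts, before building its own count back up). Concretely, if $w_2$ reaches $c$ at time $t_2$ via partner $u_2$, the inductive hypothesis applied to the pair $(w_2,u_2)$ at count $c-1$ gives $|P_{w_2}(c)|\geq c+1$, and $|P_{w_1}(c)|\geq c+1$, but to conclude $c+2$ for the union you must exclude $P_{w_1}(c)=P_{w_2}(c)$ with both of size exactly $c+1$; no step of your outline addresses this equality case, and the pairwise invariant alone does not seem strong enough to exclude it inductively. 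The same overlap problem sinks your handling of the first technicality: when the partner $u$ has never reset you ``augment $P_v(c)$ with $\{u\}$,'' but $u$ may already belong to $P_v(c-1)$ (it could have partnered $v$ earlier with a garbage count and then incremented), so no new agent is gained; note also that the hypothesis that all agents have reset holds only at the final configuration, not at the earlier interactions you are recursing through, so it cannot certify anything about $u$ at that moment.

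For comparison, the paper closes exactly this hole with a \emph{stronger, distribution-level} invariant rather than a pairwise one: it maintains a single witness set anchored at the reset agent $a$, with explicit add/swap rules (when $a$ increments against an outsider, add it; when an inside agent increments against an outsider, swap them), and proves that if $a$ has count $k$ then at most $m$ inside agents have count $\geq k-m$ for every $m$ --- in particular \emph{no} inside agent shares $a$'s count. That last clause is what guarantees every increment of $a$ is witnessed by a genuinely new agent, which is precisely the ``fresh witness'' fact you need but do not establish; the paper then verifies the invariant case-by-case over all interaction types, including resets of inside agents and inside--inside increments. Your wrap-around discussion is also more complicated than necessary (one can simply restart the bookkeeping at the last time the agent's count was $0$), though that is a minor point. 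To repair your proof you would either need to prove the equality/overlap case cannot occur --- which will likely force you into tracking the full count distribution of the pedigree, i.e., essentially the paper's invariant --- or replace the pairwise invariant with one of comparable strength.
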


One of the challenges of demonstrating protocol correctness is that the population can begin in any arbitrary configuration of states. For instance, it may be the case that $root$ corresponds to the correct root, but the $count$ and $\varA{HAS-MORE}$ table are so poorly initialized that we end up erroneously incrementing $root$! The following lemma captures what happens when we have a favorable initialization: when all agents have reset at least once, then no $count$ for input $\sigma$ can ever overestimate the actual number of agents with input $\sigma$ in the population. When we know that all agents have reset at least once, we can guarantee convergence.

\begin{lemma} \label{l:converge}
    Let $f: \mathcal{X} \rightarrow Y$ be a function over finite multisets on a finite alphabet, and let $\mathcal{R} = \{R_0, R_1, \hdots, R_{|\mathcal{R}| - 1} \}$ be a finite and minimally sized root set of $\mathcal{X}$. Let $A \in \mathcal{X}$ be the multiset whose elements are dispersed amongst the agents. Suppose that there is an infinitely occurring configuration where all agents have reset at least once. Then the protocol will converge with output $f(A)$.
\end{lemma}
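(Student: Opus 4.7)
The plan is to show that every agent's $\emph{root}$ value eventually settles at some index $j$ with $f(R_j) = f(A)$, after which every agent outputs $f(A)$ and that output is preserved by every subsequent interaction. The crux is a trapping claim: once an agent has $\emph{root} = j^*$ with $R_{j^*} \subseteq A$ and we are in the post-hypothesis regime, its increment condition can never fire. Suppose for contradiction that some agent's $\varA{HAS-MORE}_{j^*, k}$ becomes a nonempty string of all $1$'s. Tracing each set bit back through the bitwise-OR chain, each bit originates from an $INDICATOR$ application by some agent with input $\sigma \in MORE_{j^*, k}$ satisfying $\emph{count} \geq m_{R_k}(\sigma) - 1$; by Lemma \ref{l:countlowerbound} this yields $m_A(\sigma) \geq m_{R_k}(\sigma)$ for every $\sigma \in MORE_{j^*, k}$. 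Combining with $m_A(\sigma) \geq m_{R_{j^*}}(\sigma) \geq m_{R_k}(\sigma)$ on the complementary symbols (using $R_{j^*} \subseteq A$ and the definition of $MORE_{j^*, k}$) gives $R_k \subseteq A$, hence $f(R_k) = f(A) = f(R_{j^*})$, contradicting the fact that $MORE_{j^*, k}$ is nonempty (which requires $f(R_{j^*}) \neq f(R_k)$).

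Next, I analyze each agent's $\emph{root}$ trajectory. Let $S$ be the set of $\emph{root}$ values attained infinitely often by a fixed agent; $S \subseteq \{0, \ldots, |\mathcal{R}| - 1\}$ is nonempty. Lemma \ref{l:iterate} forces $i + 1 \bmod |\mathcal{R}| \in S$ for every $i \in S$ with $f(R_i) \neq f(A)$. Moreover, if $|S| \geq 2$ then for every $i \in S$ the agent departs $\emph{root} = i$ infinitely often, and since the only transition out of $\emph{root} = i$ is the increment to $i + 1 \bmod |\mathcal{R}|$, this successor also lies in $S$ even when $f(R_i) = f(A)$. Thus $S$ would be closed under successor and would equal $\{0, \ldots, |\mathcal{R}| - 1\}$; this includes some $j^*$ with $R_{j^*} \subseteq A$ (guaranteed since $A$ has a root in $\mathcal{R}$), contradicting the trapping claim. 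Hence $|S| = 1$, say $S = \{j\}$. If $f(R_j) \neq f(A)$, Lemma \ref{l:iterate} would force $j + 1 \bmod |\mathcal{R}|$ into $S$, contradicting $|S| = 1$. Therefore $f(R_j) = f(A)$ and the agent's $\emph{root}$ is ultimately constant equal to $j$.

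Taking the maximum of these stabilization times across the finite population, there is a configuration after which every agent's $\emph{root}$ is permanently fixed at an index with output $f(A)$, and since no agent's increment condition fires thereafter, every successor configuration preserves this, giving convergence to $f(A)$. The main obstacle is the trapping claim: bits in $\varA{HAS-MORE}$ propagate across agents via bitwise-OR rather than being produced solely by the interacting agent, so one must carefully argue that every set bit present in the post-hypothesis regime traces back through such OR chains to an $INDICATOR$ application to which Lemma \ref{l:countlowerbound} applies, thereby certifying the relevant count as a valid lower bound on $m_A(\sigma)$.
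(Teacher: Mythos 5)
Your overall architecture matches the paper's: cycle through roots via Lemma \ref{l:iterate}, and show that an agent sitting at a root $R_{j^*} \subseteq A$ cannot be dislodged, using Lemma \ref{l:countlowerbound} and the two-case multiplicity argument to conclude $R_k \subseteq A$ and contradict the nonemptiness of $MORE_{j^*,k}$; your explicit analysis of the set $S$ of infinitely recurring $\emph{root}$ values is essentially a cleaner packaging of the paper's ``either it eventually outputs correctly forever or it keeps incrementing and cycles through every root'' step. The genuine gap is in the justification of your trapping claim. You assert that every set bit of $\varA{HAS-MORE}_{j^*,k}$ present in the post-hypothesis regime traces back through the bitwise-OR chain to an $INDICATOR$ application to which Lemma \ref{l:countlowerbound} applies. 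That is not true: a bit can originate in the adversarial initial configuration (no $INDICATOR$ application at all), or in an $INDICATOR$ application performed by an agent \emph{before its own first reset}, when its $\emph{count}$ is arbitrary; Lemma \ref{l:countlowerbound} certifies an agent's $\emph{count}$ only after that agent has been reset. Such stale bits can survive into the regime where all agents have reset: an agent that reset early can absorb, via OR, a garbage bit from a not-yet-reset partner, and if it never resets again it carries (and propagates) that bit indefinitely. So the trapping claim as you state it---conditioned only on $\emph{root} = j^*$, $R_{j^*} \subseteq A$, and all agents having reset once---cannot be established by the proposed tracing argument, since the agent's own table (and its partners' tables) may already hold bits that certify nothing about the population.

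The paper avoids this by anchoring the contradiction at a configuration in which the fixed agent has \emph{just incremented into} the true root and is therefore in a reset state with an all-zero $\varA{HAS-MORE}$ table (such a configuration recurs infinitely often if the agent keeps cycling); from there, only bits set after that moment matter, and those are backed by counts of agents that have already reset, to which Lemma \ref{l:countlowerbound} applies. Your own invocation of the trapping claim in the $|S| \geq 2$ case is exactly of this form---after the hypothesis configuration the agent enters $j^*$ only via an increment, hence freshly reset---so the repair is to weaken the claim to ``an agent that increments into $j^*$ with $R_{j^*} \subseteq A$ after the hypothesis configuration never increments again,'' and to argue from the zeroed table as the paper does, rather than attempting to certify arbitrary pre-existing bits by tracing OR chains.
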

\begin{proof}
    Suppose we are in an infinitely occurring configuration $C$ where all agents have reset at least once. This means that all agents have had their counters reset to $count = 0$ at some point. By Lemma \ref{l:countlowerbound} this means that an agent with $count = n$ and input $\sigma$ implies that there must be at least $n + 1$ agents with input $\sigma$ in the population. 
    
    Fix an arbitrary agent, where we aim to show that this agent will output $f(A)$ forever. To do this, we will find it useful to:
    \begin{itemize}
        \item Reset the agent (again) so that its $WrongOutput?$ table $\varA{HAS-MORE}$ has all 0's as its entries. This gets rid of any new bits that might've been set since the agent's last reset.
        \item Make the agent's $RootCount$ state $root = i$, where $R_i \subseteq A$.
    \end{itemize} 
    If the agent currently has the wrong output, then repeated use of Lemma \ref{l:iterate} will increment the $root$ until the agent has the correct output, $f(R_{root}) = f(A)$. Now we can assume the agent has the correct output in some infinitely occurring configuration. From this configuration, it either eventually outputs correctly for all time (in which case we are done) or it keeps changing output by repeatedly incrementing $root$ and resetting. Therefore we can now assume that there is an infinitely occurring configuration where the agent has $root = i$ with table $\varA{HAS-MORE}$ having all 0 bitstrings. Since $root = i$, the agent correctly outputs $f(R_i) = f(A)$.
    
    To show that this agent will eventually converge on this answer, consider the case where the agent changes its output again. For this to happen, it must be the case that its $RootCount$ state $root = i$ was incremented yet again. This means that for some $j$, the $WrongOutput?$ state entry $\varA{HAS-MORE}_{i,j}$ becomes all 1's. Note that the definition states that $MORE_{i,j}$, the set of symbols occurring more often in $R_j$ than in $R_i$, must be \textbf{nonempty} for this to happen. Since the agent was just reset with 0's in every component of $\varA{HAS-MORE}$, this can only happen if for each $\sigma \in MORE_{i,j}$ there is some agent with a $count$ sufficiently high enough: specifically, $count \geq m_{R_j}(\sigma) - 1$. By Lemma \ref{l:countlowerbound}, it must be that there are at least $m_{R_j}(\sigma)$ agents with input symbol $\sigma$ in the population. 
    
    We claim that now we have that $R_j \subseteq A$, which we show by considering two cases. If $\sigma \in MORE_{i,j}$, then we already know there are at least $m_{R_j}(\sigma)$ agents with input symbol $\sigma$, so the count of $\sigma$ in $R_j$ is at most the count in $A$. If $\sigma \not \in MORE_{i,j}$, then the number of occurrences of $\sigma$ in $R_j$ is at most the number in $R_i$, which is at most the number in $A$ since $R_i \subseteq A$. Thus the multiplicity of each element in $R_j$ is at most the respective multiplicity in $A$. Therefore $R_j \subseteq A$ and hence $f(R_i) = f(A) = f(R_j)$. This is a contradiction because this would make $MORE_{i,j}$ the empty set.
\end{proof}

Finally, we can extend convergence to unfavorable initializations. Essentially, Lemma \ref{l:iterate} tells us that agents that output the wrong answer will eventually reset. If agents outputting the right answer eventually reset as well, we would be done due to Lemma \ref{l:converge}. We accomplish this by way of contradiction, where the protocol not converging would lead to at least one agent iterating through the root set. If this iterating agent is always able to set a bitstring entry to all 1's for any choice of $root$, then any other agent that hasn't reset yet should be able to as well. 
\begin{theorem*}[Self-Stabilizing Population Protocol Theorem]
    Let $f: \mathcal{X} \rightarrow Y$ be a function computable with population protocols on a complete interaction graph, where $\mathcal{X}$ is a set of multisets. Then
    \begin{align*}
        f \text{ has a self-stabilizing protocol} \iff (\forall A,B \in \mathcal{X}, A \subseteq B \implies f(A) = f(B)).
    \end{align*}
\end{theorem*}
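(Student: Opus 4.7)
The forward direction is the standard subgraph-isolation argument already established as Theorem \ref{T:Subgraph}: if $f(A) \neq f(B)$ for some $A \subseteq B$ in $\mathcal{X}$, then after running any self-stabilizing protocol on input $B$ to convergence, a globally-fair scheduler can isolate a subpopulation whose inputs form $A$ and, by self-stabilization, force it to re-converge to $f(A) \neq f(B)$, contradicting the convergence definition. So the real content of the theorem lies in the converse, and the plan is to verify that \textit{SS-Protocol}, run on any $A \in \mathcal{X}$ from any starting configuration, converges to $f(A)$.

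My approach is to chain four lemmas in order. First, a Pigeonhole argument (Lemma \ref{l:countbound}) shows that $SymbolCount$ eventually drives some agent with input $\sigma$ to $\emph{count} \geq n - 1$ whenever $n \leq M$ agents share that input; the only subtlety is bounding the finitely many interference resets by arguing that $\emph{root}$ can only absorb a given $\sigma$ into $\varA{HAS-MORE}$ in a bounded way before cycling out. Second, for the key iteration lemma (Lemma \ref{l:iterate}), I show that if an agent holds $\emph{root} = i$ with $f(R_i) \neq f(A)$, then $R_i \not\subseteq A$ forces the existence of some $R_j \subseteq A$ with $f(R_j) \neq f(R_i)$, so $MORE_{i,j}$ is nonempty (otherwise $R_j \subseteq R_i$, contradicting minimality of $\mathcal{R}$); applying the count lemma to each $\sigma \in MORE_{i,j}$ and propagating via bitwise OR under global fairness eventually fills $\varA{HAS-MORE}_{i,j}$ with $1$s, forcing a reset and an increment of $\emph{root}$.

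Third, I establish the post-reset lower bound (Lemma \ref{l:countlowerbound}): once every agent has reset at least once, any agent with $\emph{count} = n$ for input $\sigma$ guarantees at least $n + 1$ such agents in the population. This requires tracking reset events carefully, since before a reset the $\emph{count}$ could have been initialized adversarially. Combining these gives Lemma \ref{l:converge}: from a configuration in which all agents have reset, any agent with the wrong output iterates via Lemma \ref{l:iterate} until $\emph{root} = i$ with $R_i \subseteq A$; from that moment no $\varA{HAS-MORE}_{i,j}$ can fill, because filling would furnish each $\sigma \in MORE_{i,j}$ at least $m_{R_j}(\sigma)$ times in $A$, and combined with $R_i \subseteq A$ this yields $R_j \subseteq A$ and hence $f(R_j) = f(A) = f(R_i)$, forcing $MORE_{i,j} = \emptyset$, a contradiction.

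The obstacle I anticipate is the final globalization step: bridging from ``all agents have eventually reset at least once'' to ``any starting configuration works.'' An agent initialized with $\emph{root} = i$ such that $R_i \subseteq A$ might, in isolation, never reset, and its stale $\emph{count}$ or $\varA{HAS-MORE}$ values could stay corrupted indefinitely. The resolution is a dichotomy argument under global fairness: either the configuration has already stabilized with every agent outputting $f(A)$ (in which case we are done), or some agent's $\emph{root}$ changes infinitely often by Lemma \ref{l:iterate}, and global fairness then forces every other agent to interact infinitely often with freshly reset agents; each such interaction either triggers a reset on the lagging agent (via a stale bit or an over-large $\emph{count}$ that causes its own $\varA{HAS-MORE}$ to fill), or the system reaches a configuration where the universal-reset hypothesis of Lemma \ref{l:converge} holds. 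Showing that this dichotomy is exhaustive, without gaps left by adversarial initializations, is the technically delicate heart of the correctness proof.
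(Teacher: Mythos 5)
Your decomposition is essentially the paper's: the forward direction via the subgraph-isolation argument of Theorem \ref{T:Subgraph}, and the converse via the chain Lemma \ref{l:countbound}, Lemma \ref{l:iterate}, Lemma \ref{l:countlowerbound}, Lemma \ref{l:converge}. The gap is exactly where you flag it: the globalization step from ``every agent has reset at least once'' to arbitrary initial configurations is not carried out, and the mechanism you sketch for it would not work. Interacting with freshly reset agents cannot trigger a reset in the stubborn agent: a freshly reset agent carries $\emph{count} = 0$ and an all-zero $\varA{HAS-MORE}$ table, so bitwise OR-ing with it contributes nothing toward filling the stubborn agent's table; and the stubborn agent's own stale bits or inflated $\emph{count}$ are precisely the corrupted data that never get cleaned if it never resets (indeed, when its $\emph{root} = i$ satisfies $R_i \subseteq A$, its $\varA{HAS-MORE}_{i,\cdot}$ entries may never legitimately fill, which is why it can refuse to reset forever). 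So the dichotomy ``either already converged, or the lagging agent is reset through contact with fresh agents'' is not exhaustive as stated.

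The missing idea, which is the actual pivot of the paper's argument, is a same-root transfer. Suppose some agent $NoReset$ with $\emph{root} = j$ (outputting correctly) never resets and the protocol nevertheless fails to converge. Then some agent $RootIncrementor$ outputs wrongly infinitely often, and by Lemma \ref{l:iterate} it also outputs correctly infinitely often, so its $\emph{root}$ is incremented (modulo $|\mathcal{R}|$) forever and in particular it holds $\emph{root} = j$ infinitely often and still increments out of $j$; that is, some entry $\varA{HAS-MORE}_{j,j'}$ of $RootIncrementor$ fills with $1$'s. The very interactions (with agents holding sufficiently large $\emph{count}$ for the symbols of $MORE_{j,j'}$) that fill $RootIncrementor$'s entry could equally be scheduled with $NoReset$ and would fill its $\varA{HAS-MORE}_{j,j'}$ entry, so by global fairness $NoReset$ does eventually increment and reset --- contradicting its definition. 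Hence either the protocol converges outright, or every agent resets at least once and Lemma \ref{l:converge} applies. Without this argument (or an equivalent one), your final step leaves precisely the adversarial initializations it was meant to handle unaccounted for.
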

\begin{proof}
    The forward direction is shown in Appendix \ref{a:impossibility}. We claim that \textit{SS-Protocol} is a self-stabilizing protocol for $f$. Consider some infinitely occurring configuration $C$ in an execution of \textit{SS-Protocol} with input multiset $A \in \mathcal{X}$; we will argue that all agents will eventually output $f(A)$ forever.
    
    By Lemma \ref{l:iterate}, all agents that currently output the wrong answer will eventually be reset. If all agents that currently output the correct answer eventually reset, then Lemma \ref{l:converge} proves the protocol converges. Otherwise, there is an agent with $root = j$ currently outputting the correct answer that will never reset; call this agent $NoReset$. If the protocol converges anyway, we are done. If not, then there is some other agent that eventually outputs the wrong answer an infinite number of times; call this agent $RootIncrementor$. By Lemma \ref{l:iterate} $RootIncrementor$ will have to output correctly eventually, and so this agent oscillates between correct and wrong outputs. This can only happen if the $RootIncrementor$'s $root$ keeps getting incremented for all time, resetting the agent each time. This means after cycling through values, $RootIncrementor$ will eventually also have its $root = j$, just like $NoReset$. Since $RootIncrementor$ will increment and reset yet again after interacting with some agents, it must be the case that $NoReset$ could also interact with these same agents to satisfy the conditions for an increment and reset (both $RootIncrementor$ and $NoReset$ will have an entry in their table $\varA{HAS-MORE}$ with all 1's). Therefore by global fairness $NoReset$ does eventually reset, a contradiction. Thus the protocol must converge.
    
\end{proof}

\section{Reset Agents and $count$ Lower Bound} \label{a:countlowerbound}
Suppose all agents during some execution of our protocol have reset (i.e. at some point had $count = 0$). Then if an agent with input $\sigma$ has $count = n$, then there must be at least $n + 1$ agents with input $\sigma$ in the population. Though this feels intuitive, the proof takes some care.

Suppose an agent $a$ with input $\sigma$ starts off with $count = 0$. It might interact with another agent $b$ with input $\sigma$ and $count = 0$, which will lead to $a$ incrementing to $count = 1$. Subsequently it may interact with an agent $c$ with input $\sigma$ and $count = 1$, leading $a$ to increment to $count = 2$. However it could be that $b = c$. So we need to actually not double count $b$, but count the latent agent $d$ that allowed $b$ to increment its $count$. We do this by maintaining a set of agents $A$ that will add agents via specific rules as the $count$ of $a$ increments. For instance:
\begin{enumerate}
    \item In the beginning, $A = \{a\}$.
    \item When $a$ increments to 1 after interacting with $b$, $A = \{a, b\}$.
    \item When $b$ increments to 1 after interacting with $d$, $A = \{a, d\}$.
    \item When $a$ increments to 2 after interacting with $b$, $A = \{a, b, d\}$.
\end{enumerate}

However, another wrinkle is that agents might reset their $count$ at any point due to the other subprotocols at play. Moreover, we need to guarantee that every time $a$ interacts with another agent, it can't be any agent from $A$. For this last fact, we must argue that it is always the case that $a$ has the strictly largest $count = k$ in $A$; doing this requires characterizing the distribution of $count$ amongst the agents of $A$. In particular there is always no other agent with $count = k$, at most 1 other agent with $count \geq k - 1$, at most 2 other agents with $count \geq k - 2$, and so on. This characterization of the distribution excludes $a$ from ever interacting with another agent in $A$, so the incrementing of $count$ always corresponds to meeting a new agent.
\begin{lemma*}
    Let $f: \mathcal{X} \rightarrow Y$ be a function over finite multisets on a finite alphabet. Let $A \in \mathcal{X}$ be the multiset whose elements are dispersed amongst the agents. Suppose that there is an infinitely occurring configuration where all agents have reset at least once. If an agent with input $\sigma$ has $count = n$, then there must be at least $n + 1$ agents with input $\sigma$ in the population.
\end{lemma*}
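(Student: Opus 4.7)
The plan is to fix the agent $a$ with input $\sigma$ and current count $n$ in the infinitely occurring configuration $C^*$, and then produce $n+1$ distinct agents with input $\sigma$ by tracing the causal chain of $a$'s count increments. Since every agent has reset at least once by the time $C^*$ occurs and $a$ has count $n$ (the case $n=0$ is immediate), there exists a most recent reset time $t_0 \le t^*$ at which $a$'s count was $0$, and over the interval $[t_0, t^*]$ the count of $a$ rose monotonically from $0$ to $n$ via exactly $n$ increment events at times $t_1 < t_2 < \cdots < t_n$. At each $t_i$ the partner $y_i$ of $a$ has input $\sigma$ and count $i-1$ just before the interaction. The remaining task is to argue that $a, y_1, \ldots, y_n$ are $n+1$ pairwise distinct agents.

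To this end I would maintain a dynamic witness set $S$ of agents with input $\sigma$, initialized to $S = \{a\}$ at time $t_0$ and updated by two rules for interactions in $[t_0, t^*]$: (i) when $a$ increments via a partner $y$, add $y$ to $S$; (ii) when some $x \in S \setminus \{a\}$ increments via a partner $y$, replace $x$ by $y$ in $S$. Letting $k$ denote $a$'s current count, the target invariant is that $S$ is a set (no duplicates) of exactly $k+1$ agents with input $\sigma$, with $a$ the unique member of maximum count $k$; in its strongest form, the counts carried by the members of $S$ are precisely $0, 1, \ldots, k$. For rule (i), the partner $y$ has count equal to $a$'s pre-increment count, which by the invariant is strictly larger than every other count in $S$, forcing $y \notin S$. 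For rule (ii), the partner $y$ has count equal to $x$'s pre-increment count, which by the invariant was carried uniquely by $x$ inside $S$, so $y \ne x$ implies $y \notin S$. Once the invariant is established, it immediately yields $|S| = n+1$ at time $t^*$, furnishing the required witnesses.

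The main obstacle is handling reset events that befall members of $S \setminus \{a\}$ during $[t_0, t^*]$: a reset snaps a count back to $0$, which can create two members of $S$ with count $0$ and thereby break the ``one agent per count'' pattern on which rule (ii) depends. I would address this by strengthening rule (ii) so that a reset of $x \in S \setminus \{a\}$ also evicts $x$ from $S$, and then using a charging argument, exploiting that $a$ itself does not reset on $[t_0, t^*]$, to show that every eviction can be paid for by a previous addition via rule (i); alternatively, one can work only along the causal subsequence of $a$'s $n$ increments and use the hypothesis that every agent has already reset once to certify that each count encountered in that subsequence was built up from $0$ through legitimate increments. In either approach, the conclusion is that $S$ still contains $n+1$ distinct agents with input $\sigma$ at time $t^*$, proving the lemma.
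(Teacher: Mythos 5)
Your skeleton matches the paper's proof: fix the main agent $a$ after its last reset, maintain a witness set with exactly your rules (i) and (ii), and prove an invariant on the count distribution inside the set that forces each of $a$'s increments to recruit a partner from outside the set. The genuine gap is in how resets of witnesses are handled, which you correctly flag as the main obstacle but do not resolve. Your ``strongest form'' invariant (the counts in $S$ are exactly $0,1,\hdots,k$, all distinct) is simply not maintainable once a member of $S\setminus\{a\}$ resets to $0$, and both repairs you sketch fail as stated. Evicting a reset agent from $S$ destroys the cardinality accounting: the whole point is that each of $a$'s $n$ increments adds one \emph{new} agent, so any eviction either leaves $|S|<n+1$ at the end, or, if you try to keep evicted agents as ``archived'' witnesses, a previously evicted agent can rebuild its count and serve as $a$'s partner again, so a later increment of $a$ no longer contributes a new agent; the charging argument ``every eviction is paid for by a previous addition'' does not rule this double-counting out. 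The ``causal subsequence'' alternative has the same problem: knowing the partner's count $i-1$ was built up legitimately gives you $i$ agents by induction, but not their distinctness from $a$ and from the other partners, which is exactly the bookkeeping the invariant must supply.

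The paper's resolution, which is the idea missing from your proposal, is to \emph{keep} reset agents in the set and weaken the invariant to an inequality that resets can only help: if $a$ has $count=k$, then for every $0\le m\le k$ at most $m$ inside agents have $count\ge k-m$ (in particular no inside agent matches $a$'s count, which is all you need for rule (i), and no two inside agents sit at counts $k$ or $k-1$). A reset only lowers an inside agent's count, so it preserves this bound; an inside--outside increment preserves the count distribution of the set under your replacement rule; and the remaining case your strong invariant ruled out --- two inside agents sharing a count (possible after resets) where one increments --- is handled by a short counting contradiction showing the bound still holds. With that weakened invariant the rest of your argument goes through verbatim and yields $|S|=n+1$. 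Without it, the step ``the partner is not already in $S$'' is unproven precisely in the post-reset situations the lemma is designed to cover.
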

\begin{proof}
    First we show that if a single agent has been reset, then $count = n$ for this agent means $n + 1$ agents have the same input symbol. If all agents are reset once, then this must be the case for every agent, and we are done.
    
    Suppose an agent $a$ with input $\sigma$ has been reset so that $count = 0$. Using a set $A$, we will add new agents with input $\sigma$ as this agent's $count$ increments. So when $a$ resets, set $A \leftarrow \{a\}$. For the rest of this proof, we will refer to $a$ as the \textit{main} agent, any agent in $A - \{a\}$ as a \textit{inside} agent, and all other agents as \textit{outside} agents. As interactions continue to occur, we will maintain the following two rules. First, say an interaction occurs between an inside agent $b$ and an outside agent $c$; if $b$ increments its $count$, then replace $b$ with $c$ in $A$ (i.e. $A \leftarrow A - \{b\} \cup \{c\}$). Notice that this operation keeps the number of agents in $A$ with $b$'s old $count$ the same, since $b$ and $c$ had the same $count$ before the interaction. Second, say an interaction occurs between our main agent $a$ and some outside agent $b \not \in A$. If $a$ increments its $count$, then add $b$ to $A$ (i.e. $A \leftarrow A \cup \{b\}$). 
    
    To do our proof, we first prove the following invariant on the distribution of $count$ on the inside agents: there are 0 inside agents with the same $count$ as $a$, at most 1 inside agent with $count$ at least one less than $a$, at most 2 inside agents with $count$ at least two less than $a$, and so on. Formally, suppose main agent $a$ has $count = k$. Then there are at most $m$ inside agents with $count \geq k - m$, where $0 \leq m \leq k$. We see how this invariant is maintained in all possible interactions that change the $count$ of agents in $A$:
    \begin{itemize}
        \item When main agent $a$ resets, this is trivial since $A = \{a\}$. This is $A$'s initial state.
        \item Suppose an inside agent with $count = j$ resets. The number of inside agents with $count \geq 0$ trivially stays the same. In all other cases, the number of inside agents either decreases by 1 or stays the same.
        \item Suppose the main agent increments its $count$ by meeting another agent. Notice that since there are 0 inside agents with $count \geq k$, this interaction must be with an outside agent. If the main agent increments its $count$ to $k + 1$, the outside agent is added to $A$ with $count = k$. Now we need to show that there are at most $m$ inside agents with $count \geq (k + 1) - m$ for $0 \leq m \leq k + 1$. Before the interaction we had that there were at most $m - 1$ inside agents with $count \geq k - (m - 1) = (k + 1) - m$ for $1 \leq m \leq k + 1$, so making this outside agent an inside agent would now make it at most $m - 1 + 1 = m$ inside agents. The only unconsidered case is $m = 0$, but this is straightforward: if there was an agent with $count \geq k + 1$ after the interaction then it must have been there before the interaction, which is impossible given our invariant.
        \item Suppose an inside agent increments its $count$ by meeting another agent. As noted earlier, the other agent cannot be the main agent due to our invariant. If it is an outside agent, then our rule states that $A$ replaces the inside agent with the outside agent; as noted earlier, this maintains the $count$ distribution of $A$ and hence maintains the invariant. Now suppose an inside agent meets an inside agent with the same $count = k - m$, resulting in one incrementing to $count = k - m + 1$, where $2 \leq m \leq k$. Notice we exclude $m = 0, 1$ since there are never two inside agents both with $count = k$ or $count = k - 1$. Since this only changes the number of inside agents with $count = k - m$ and $count = k - m + 1$, this does not change the number of inside agents with $count$ at least $0, 1, \hdots, k - m - 1, k - m + 2, k - m + 3, \hdots, k$. If the invariant is violated, it can only be because there are now too many agents with $count = k - m + 1$. Specifically a violation means there must be at least $m$ agents with $count = k - m + 1$; this means that before the interaction there were at least $m - 1$ agents with $count = k - m + 1$ and 2 agents with $count = k - m$. This gives a total of at least $m + 1$ agents with $count \geq k - m$ before the interaction, a contradiction. Thus the invariant is not violated.
    \end{itemize} 
    
    Therefore at all points this invariant on the inside agents holds. A corollary is that when $a$'s $count = k$, then $|A| = k + 1$ and hence there are $k + 1$ agents with input $\sigma$. When $count = 0$, then $A$ is the singleton $\{a\}$. Suppose agent $a$ just incremented to $count = k + 1$, having met some other agent $b$ with $count = k$. By the invariant, no inside agent could have that $count$, so $b \not \in A$. Therefore $b$ gets added to $A$, making $|A| = k + 1$.

\end{proof}

\section{The Root Set}
\subsection{Root Set Example} \label{A:RootSetExample}
For instance, take $\mathcal{X}$ over alphabet $\Sigma = \{a, b, c, d, e, f\}$ as
            $$\mathcal{X} = \{ \{a, a, b\}, \{a, a, b, b, c\}, \{e, e, e, f, f, f, b, d \}, \{d\} \}.$$
A root set $\mathcal{R} \subseteq \mathcal{X}$ could be
                            $$\mathcal{R} = \{ \{a, a, b\},  \{d\} \},$$
where $\{a, a, b\}$ and $\{d\}$ are the roots.

\subsection{Self-Contained Proof of Finite Root Set} \label{A:RootSetProof}
The language of the following proof is simplified, without loss of generality, when using vectors. So we will use the following equivalent definition of a vector root set for the next proof. See Appendix \ref{A:VectorRootSetExample} for an example.
\begin{definition}{\textit{Vector Root Set and its Vector Roots.}}
    Let $\mathcal{Y}$ be a set of $n$-dimensional vectors with nonnegative integer components. A subset $\mathcal{R} \subseteq \mathcal{Y}$ is called a vector root set of $\mathcal{Y}$ if and only if for all $\vec{a} = (a_1, \hdots, a_n) \in \mathcal{Y}$, there exists a $\vec{r} = (r_1, \hdots, r_n) \in \mathcal{R}$ such that for all $i$, $r_i \leq a_i$. We call such a vector $\vec{r}$ a vector root of $\vec{a}$.
\end{definition}

The following lemma asserts a finite root set of $\mathcal{X}$ on alphabet $\Sigma = \{\sigma_1, \hdots, \sigma_k\}$ exists if and only if a finite vector root set of
    $$\mathcal{Y} = \{(m_A(\sigma_1), \hdots, m_A(\sigma_k)) \mid A \in \mathcal{X}\}$$
exists. The intuition is that $A \subseteq B$ if and only if $m_A(\sigma) \leq m_B(\sigma)$ for all $\sigma$ in the alphabet. 
\begin{lemma*}
    \label{L:VectorRoot}
    Let $\mathcal{X}$ be a set of finite multisets on an alphabet $\Sigma  = \{\sigma_1, \hdots, \sigma_k\}$. Let $\mathcal{Y}$ be a set of $k$-dimensional vectors generated from $\mathcal{X}$ via
        $$\mathcal{Y} = \{(m_A(\sigma_1), \hdots, m_A(\sigma_k)) \mid A \in \mathcal{X}\}.$$
    Then $\mathcal{X}$ has a finite root set if and only if $\mathcal{Y}$ has a finite vector root set.
\end{lemma*}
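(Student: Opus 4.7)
The plan is to exploit the obvious bijective correspondence between finite multisets on $\Sigma = \{\sigma_1,\ldots,\sigma_k\}$ and their multiplicity vectors in $\mathbb{N}^k$, and to show that this correspondence is an order isomorphism with respect to multiset inclusion on one side and the pointwise partial order on the other. Once that is set up, a root set transports to a vector root set and vice versa by simply applying the map (or its inverse).

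Concretely, I would define $\varphi: \mathcal{X} \to \mathcal{Y}$ by $\varphi(A) = (m_A(\sigma_1),\ldots,m_A(\sigma_k))$. By the construction of $\mathcal{Y}$, $\varphi$ is surjective, and since a finite multiset on a finite alphabet is completely determined by its vector of multiplicities, $\varphi$ is also injective, hence a bijection. The key observation, which I would state and briefly justify, is the order-isomorphism property: for any $A,B \in \mathcal{X}$, $A \subseteq B$ (as multisets) if and only if $m_A(\sigma_i) \leq m_B(\sigma_i)$ for every $i$, i.e.\ $\varphi(A) \leq \varphi(B)$ componentwise. This is immediate from the definition of multiset inclusion in terms of multiplicities.

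For the forward direction, suppose $\mathcal{R} \subseteq \mathcal{X}$ is a finite root set. I claim $\varphi(\mathcal{R}) \subseteq \mathcal{Y}$ is a finite vector root set. Finiteness follows from $|\varphi(\mathcal{R})| \leq |\mathcal{R}|$. Given any $\vec{a} \in \mathcal{Y}$, pick $A \in \mathcal{X}$ with $\varphi(A) = \vec{a}$; by hypothesis there is $R \in \mathcal{R}$ with $R \subseteq A$, and the order-isomorphism property gives $\varphi(R) \leq \vec{a}$ componentwise. For the reverse direction, suppose $\mathcal{S} \subseteq \mathcal{Y}$ is a finite vector root set. Let $\mathcal{R} = \varphi^{-1}(\mathcal{S}) \subseteq \mathcal{X}$, which is finite since $\varphi$ is a bijection. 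For any $A \in \mathcal{X}$, there exists $\vec{r} \in \mathcal{S}$ with $\vec{r} \leq \varphi(A)$ componentwise; setting $R = \varphi^{-1}(\vec{r}) \in \mathcal{R}$, the order-isomorphism property yields $R \subseteq A$.

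I do not expect any real obstacle here: the entire argument is bookkeeping around the bijection $\varphi$ and the equivalence $A \subseteq B \iff \varphi(A) \leq \varphi(B)$. The only thing to be mildly careful about is that the map one transports along in each direction really does produce a subset of the correct ambient set ($\mathcal{Y}$ in the forward direction and $\mathcal{X}$ in the reverse), which is ensured by defining $\mathcal{Y}$ as the image of $\varphi$ and by inverting $\varphi$ only on points known to lie in $\mathcal{Y}$.
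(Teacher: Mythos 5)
Your proposal is correct and follows essentially the same route as the paper's proof: both directions transport a root set along the multiplicity-vector correspondence, using the fact that $A \subseteq B$ holds exactly when the multiplicity vectors compare pointwise. Phrasing it as a bijection/order-isomorphism is just a slightly more explicit packaging of the paper's argument.
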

\begin{proof}
    Let $\mathcal{R}_\mathcal{X} \subseteq \mathcal{X}$ be a finite root set of $\mathcal{X}$. Define
        $$\mathcal{R}_\mathcal{Y} = \{ (m_R(\sigma_1), \hdots, m_R(\sigma_k)) \mid R \in \mathcal{R}_\mathcal{X} \}.$$
    Certainly $\mathcal{R}_\mathcal{Y} \subseteq \mathcal{Y}$ since $\mathcal{R}_\mathcal{X} \subseteq \mathcal{X}$. For any vector $\vec{a} = (m_A(\sigma_1), \hdots, m_A(\sigma_k)) \in \mathcal{Y}$, by the definition of root set there is a root $R \in \mathcal{R}_\mathcal{X}$ such that $R \subseteq A$. Then $m_R(\sigma_i) \leq m_A(\sigma_i)$ for all $i$, and so $(m_R(\sigma_1), \hdots, m_R(\sigma_k)) \in \mathcal{R}_\mathcal{Y}$ is the vector root for $\vec{a}$. Thus $\mathcal{R}_\mathcal{Y}$ is a finite vector root set.
    
    Conversely (and via a symmetric argument), let $\mathcal{R}_\mathcal{Y}$ be a finite vector root set. Define
        $$\mathcal{R}_\mathcal{X} = \{R \mid (m_R(\sigma_1), \hdots, m_R(\sigma_k)) \in \mathcal{R}_\mathcal{Y} \}.$$
    Certainly $\mathcal{R}_\mathcal{X} \subseteq \mathcal{X}$ since $\mathcal{R}_\mathcal{Y} \subseteq \mathcal{Y}$. For any multiset $A \in \mathcal{X}$, there is a vector root $(m_R(\sigma_1), \hdots, m_R(\sigma_k)) \in \mathcal{R}_\mathcal{Y}$ such that for all $i$, $m_R(\sigma_i) \leq m_A(\sigma_i)$. Therefore $R \in \mathcal{R}_\mathcal{X}$ is a root for $A$, and thus $\mathcal{R}_\mathcal{X}$ is a finite root set.
\end{proof}

With that in mind, showing that a finite vector root set of $\mathcal{Y}$ always exists would imply that a finite root set of $\mathcal{X}$ always exists. The proof uses strong induction on the dimension of the vector, $k$, to construct a finite vector root set. We use the following inductive step. Fix a single vector $\vec{v} = (v_1, \hdots, v_{k+1}) \in \mathcal{Y}$ that shall be included in the vector root set we are constructing (vector $\vec{v}$ will be fixed for the whole argument). If there is a $\vec{u} = (u_1, \hdots, u_{k+1}) \in \mathcal{Y}$ such that $u_i \geq v_i$ for all $i$, then $\vec{v}$ will be a root of $\vec{u}$ in the root set. So we focus our attention only on those $\vec{u} \in \mathcal{Y}$ where there exists an $i$ such that $u_i < v_i$; call the set of these vectors $\mathcal{Y}' \subseteq \mathcal{Y}$. 

To illustrate the next portion of our argument, we present an overview augmented with an example. We aim to partition $\mathcal{Y}'$ into finitely many different classes where all vectors in a single class have a common component; we can ignore this component, generate a finite vector root set by the inductive hypothesis, and then reinsert the common component to generate a finite vector root set for the class. The union of all these finite vector root sets constructs a vector root set for $\mathcal{Y'}$. 

Given parameters $1 \leq i \leq k + 1$ and $0 \leq m < v_i$, we define the set $$F_{i,m} = \{ \vec{u} = (u_1, \hdots, u_{k+1}) \in \mathcal{Y} \mid u_i = m\}$$ that constitutes all vectors in $\mathcal{Y}$ whose $i^{th}$ component is exactly $m$. For example, let $$\mathcal{Y'} = \{ (a,b,c) \mid a,b,c \geq 1 \text{ where exactly one of $a$, $b$, or $c$ is 1}\}.$$ Suppose we have $i = 1$ and $m = 2$ (meaning the first component is always a 2) as follows: $$F_{1,2} = \{ (2,1,2), (2,1,3), (2, 1, 4), \hdots \} \cup \{ (2, 2, 1), (2, 3, 1), (2, 4, 1), \hdots \}.$$ We define a corresponding set $F'_{i,m}$ that is identical to $F_{i,m}$ except it ignores the $i^{th}$ component of every vector in $F_{i,m}$. In this case, $$F'_{1, 2} = \{ (1,2), (1,3), (1, 4), \hdots \} \cup \{ (2, 1), (3, 1), (4, 1), \hdots \}.$$ Since this reduces the dimension of the vectors to $k$, we can get a vector root set $\mathcal{R'}_{i,m}$ by our inductive hypothesis. Here we can take the vector root set as $$\mathcal{R}'_{1,2} = \{ (1, 2), (2, 1) \}.$$ Given that the $i^{th}$ component of the vectors in $F_{i,m}$ are all value $m$, we can add back the $i^{th}$ component to each $\vec{r'} \in \mathcal{R}'_{i,m}$ to get a vector root set, $\mathcal{R}_{i,m}$, for $F_{i,m}$. $$\mathcal{R}_{1,2} = \{ (2, 1, 2), (2, 2, 1) \}.$$

As we are considering vectors $\vec{u} = (u_1, \hdots, u_{k+1}) \in \mathcal{Y}$ where there exists an $i$ such that $u_i < v_i$, $\vec{u}$ will have a root in $\mathcal{R}_{i,u_i}$. The constructed \textit{finite} root set is therefore $$\mathcal{R} = \{\vec{v}\} \cup \bigcup_{1 \leq i \leq k + 1, 0 \leq m < v_i} \mathcal{R}_{i,m}.$$ This proof is formally written in Appendix \ref{A:FiniteVectorRootSet}.

\begin{lemma}[\textit{Finite Root Set Lemma}]
    \label{L:FiniteVectorRootSet}
    Let $\mathcal{Y}$ be a set of $k$-dimensional vectors with nonnegative components. $\mathcal{Y}$ has a finite vector root set. In other words, there exists a finite subset $\mathcal{R} \subseteq \mathcal{X}$ such that for any $\vec{a} = (a_1, \hdots, a_k) \in \mathcal{Y}$, there exists $\vec{r} = (r_1, \hdots, r_k) \in \mathcal{R}$ such that $r_i \leq a_i$ for all $i$.
\end{lemma}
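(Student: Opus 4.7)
The plan is to prove this by strong induction on the dimension $k$. For the base case $k = 1$, a set $\mathcal{Y} \subseteq \mathbb{N}$ of nonnegative integers has a single minimum element, which by itself constitutes a finite vector root set: for every $a \in \mathcal{Y}$, the minimum is at most $a$. (The case $k = 0$ is vacuous.)

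For the inductive step, assume the statement for all dimensions up to $k$, and consider an arbitrary nonempty $\mathcal{Y} \subseteq \mathbb{N}^{k+1}$. I would fix an arbitrary vector $\vec{v} = (v_1, \ldots, v_{k+1}) \in \mathcal{Y}$ and include it in the root set under construction. Every $\vec{u} \in \mathcal{Y}$ with $u_i \geq v_i$ for all $i$ then has $\vec{v}$ as a root, so it remains to handle the set $\mathcal{Y}' \subseteq \mathcal{Y}$ of vectors $\vec{u}$ such that $u_i < v_i$ for some coordinate $i$. I would partition $\mathcal{Y}'$ into the finitely many classes
\[
    F_{i,m} = \{ \vec{u} \in \mathcal{Y} \mid u_i = m \}, \quad 1 \leq i \leq k+1, \; 0 \leq m < v_i,
\]
noting that there are exactly $v_1 + v_2 + \cdots + v_{k+1}$ such classes, which is finite, and that every $\vec{u} \in \mathcal{Y}'$ lies in at least one of them (by choosing $i$ with $u_i < v_i$ and letting $m = u_i$).

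For each class $F_{i,m}$, I would define $F'_{i,m}$ as the set of $k$-dimensional vectors obtained by deleting the $i^{th}$ coordinate from each element of $F_{i,m}$. Since $F'_{i,m} \subseteq \mathbb{N}^k$, the inductive hypothesis yields a finite vector root set $\mathcal{R}'_{i,m}$ for $F'_{i,m}$. Reinserting the value $m$ into the $i^{th}$ coordinate of each element of $\mathcal{R}'_{i,m}$ produces a finite subset $\mathcal{R}_{i,m} \subseteq F_{i,m} \subseteq \mathcal{Y}$; a short verification confirms that $\mathcal{R}_{i,m}$ is a vector root set for $F_{i,m}$, because the pointwise order is compatible with coordinate deletion when the deleted coordinate matches. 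Taking the union
\[
    \mathcal{R} = \{\vec{v}\} \cup \bigcup_{1 \leq i \leq k+1,\, 0 \leq m < v_i} \mathcal{R}_{i,m}
\]
gives a finite subset of $\mathcal{Y}$ that is a vector root set of all of $\mathcal{Y}$.

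The main obstacle is organizing the inductive step cleanly so that every element of $\mathcal{Y}'$ is guaranteed a root in some $\mathcal{R}_{i,m}$ and that the reinsertion step truly preserves the root property. The key point is that for any $\vec{u} \in \mathcal{Y}'$ with $u_i < v_i$, the vector $\vec{u} \in F_{i,u_i}$, and the $k$-dimensional root $\vec{r}' \in \mathcal{R}'_{i,u_i}$ of the projection of $\vec{u}$ lifts to a root $\vec{r} \in \mathcal{R}_{i,u_i}$ of $\vec{u}$ (both have $i^{th}$ coordinate equal to $u_i$, and the other coordinates of $\vec{r}$ are componentwise dominated by those of $\vec{u}$). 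Once this is carefully spelled out, the rest is bookkeeping: finiteness of $\mathcal{R}$ follows from the finite union of finite sets.
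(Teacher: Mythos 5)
Your proposal is correct and follows essentially the same argument as the paper's own proof: strong induction on the dimension, fixing a vector $\vec{v}$, partitioning the non-dominating vectors into the finitely many classes $F_{i,m}$ with $0 \leq m < v_i$, applying the inductive hypothesis to the coordinate-deleted sets $F'_{i,m}$, and reinserting the fixed coordinate before taking the finite union. No gaps to report.
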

\begin{corollary}
    \label{C:FiniteRootSet}
    Let $\mathcal{X}$ be a set of finite multisets over a finite alphabet. $\mathcal{X}$ has a finite root set. In other words, there exists a finite subset $\mathcal{R} \subseteq \mathcal{X}$ such that for any $A \in \mathcal{X}$, there exists $R \in \mathcal{R}$ such that $R \subseteq A$.
\end{corollary}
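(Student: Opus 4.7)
The plan is to reduce the multiset statement to the vector version already established in Lemma~\ref{L:FiniteVectorRootSet}. The key observation is that multisets over a finite alphabet $\Sigma = \{\sigma_1,\ldots,\sigma_k\}$ are in bijection with vectors in $\mathbb{N}^k$ via the multiplicity encoding, and multiset inclusion $A \subseteq B$ corresponds exactly to pointwise inequality of the multiplicity vectors. So the whole statement about existence of a finite root set for multisets should transfer verbatim from the vector setting.

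Concretely, I would first define the multiplicity vector $\varphi(A) = (m_A(\sigma_1),\ldots,m_A(\sigma_k))$ and form $\mathcal{Y} = \{\varphi(A) : A \in \mathcal{X}\} \subseteq \mathbb{N}^k$. Applying Lemma~\ref{L:FiniteVectorRootSet} to $\mathcal{Y}$ yields a finite vector root set $\mathcal{R}_\mathcal{Y} \subseteq \mathcal{Y}$. Then I would invoke Lemma~\ref{L:VectorRoot}, which explicitly states that $\mathcal{X}$ admits a finite root set if and only if $\mathcal{Y}$ admits a finite vector root set. Chaining these two gives the corollary immediately; in fact the proof of Lemma~\ref{L:VectorRoot} even exhibits $\mathcal{R}_\mathcal{X} = \{R \in \mathcal{X} : \varphi(R) \in \mathcal{R}_\mathcal{Y}\}$ as an explicit finite root set for $\mathcal{X}$.

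The main obstacle is not in this corollary but in Lemma~\ref{L:FiniteVectorRootSet} that it rests on, whose proof is carried out separately: it proceeds by strong induction on the dimension $k$, fixing a seed vector $\vec v$, handling every $\vec u$ with all $u_i \geq v_i$ using $\vec v$ itself, and partitioning the remaining vectors into finitely many strips $F_{i,m}$ (one for each coordinate $i$ and each value $m < v_i$) on which the inductive hypothesis in dimension $k-1$ produces finitely many roots apiece. Once that lemma is in hand, the corollary is essentially a change of notation from vectors to multisets via $\varphi$, with no further work required.
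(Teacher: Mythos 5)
Your proposal is correct and matches the paper's own argument: the paper likewise obtains the corollary by encoding multisets as multiplicity vectors, invoking the Finite Root Set Lemma (Lemma \ref{L:FiniteVectorRootSet}, proved by the same strong induction with the seed vector $\vec{v}$ and strips $F_{i,m}$), and transferring back via the equivalence in the vector-root-set lemma, whose proof gives exactly the explicit $\mathcal{R}_\mathcal{X}$ you describe. No gaps; nothing further is needed.
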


\subsection{Vector Root Set Example} \label{A:VectorRootSetExample}
Consider the previous example from Appendix \ref{A:RootSetExample} in the language of vector root sets, where we convert each multiset into a $|\Sigma|$-dimensional vector of multiplicities. Take $\mathcal{Y}$ over \textit{ordered} alphabet $\Sigma = \{a, b, c, d, e, f\}$ as 
$$\mathcal{Y} = \{ (2,1,0,0,0,0), (2,2,1,0,0,0), (0,1,0,1,3,3), (0,0,0,1,0,0) \}.$$ A vector root set $\mathcal{R} \subseteq \mathcal{Y}$ could be $$\mathcal{R} = \{ (2,1,0,0,0,0), (0,0,0,1,0,0) \},$$ where $(2,1,0,0,0,0)$ and $(0,0,0,1,0,0)$ are the roots. Analogously, $\mathcal{Y}$ is trivially its own root set, but it also could have been infinitely large; we are searching for finite vector root sets.

\subsection{Proof of Finite Root Set Lemma} \label{A:FiniteVectorRootSet}
\begin{lemma*}[\textit{Finite Root Set Lemma}]
    Let $\mathcal{Y}$ be a set of $k$-dimensional vectors with nonnegative components. $\mathcal{Y}$ has a finite vector root set. In other words, there exists a finite subset $\mathcal{R} \subseteq \mathcal{X}$ such that for any $\vec{a} = (a_1, \hdots, a_k) \in \mathcal{Y}$, there exists $\vec{r} = (r_1, \hdots, r_k) \in \mathcal{R}$ such that $r_i \leq a_i$ for all $i$.
\end{lemma*}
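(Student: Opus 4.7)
The plan is to prove the lemma by strong induction on the dimension $k$, following the sketch that immediately precedes the statement in the excerpt.

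For the base case $k=1$, any nonempty subset $\mathcal{Y} \subseteq \mathbb{N}$ has a least element $v$, and the singleton $\{v\}$ is a finite vector root set since $v \leq a$ for every $a \in \mathcal{Y}$. The case $\mathcal{Y} = \emptyset$ is trivial with $\mathcal{R} = \emptyset$.

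For the inductive step, assume the lemma holds in every dimension $\leq k$, and let $\mathcal{Y}$ be a set of $(k+1)$-dimensional vectors with nonnegative integer components. If $\mathcal{Y}$ is empty we are done. Otherwise fix any $\vec{v} = (v_1, \ldots, v_{k+1}) \in \mathcal{Y}$; this vector will play the role of a ``reference'' root that handles all $\vec{u} \in \mathcal{Y}$ that dominate $\vec{v}$ componentwise. Let $\mathcal{Y}' = \{\vec{u} \in \mathcal{Y} : \exists\, i \text{ with } u_i < v_i\}$ be the remaining vectors. For each pair $(i,m)$ with $1 \leq i \leq k+1$ and $0 \leq m < v_i$, define $F_{i,m} = \{\vec{u} \in \mathcal{Y} : u_i = m\}$. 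Every $\vec{u} \in \mathcal{Y}'$ lies in at least one $F_{i,m}$, because some coordinate $u_i$ is strictly less than $v_i$, and the number of such pairs $(i,m)$ is $\sum_{i=1}^{k+1} v_i$, which is finite.

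Next, for each pair $(i,m)$, obtain $F'_{i,m}$ from $F_{i,m}$ by deleting the $i$-th coordinate of every vector; the resulting vectors lie in $\mathbb{N}^{k}$, so by the induction hypothesis $F'_{i,m}$ admits a finite vector root set $\mathcal{R}'_{i,m}$. Reinserting the value $m$ back into the $i$-th coordinate of each element of $\mathcal{R}'_{i,m}$ produces a set $\mathcal{R}_{i,m} \subseteq F_{i,m}$, which by construction is a finite root set of $F_{i,m}$: any $\vec{u} \in F_{i,m}$ has its ``deleted'' version rooted by some $\vec{r}\,' \in \mathcal{R}'_{i,m}$, and reinserting $m$ in both preserves all coordinatewise inequalities, while the $i$-th coordinate matches exactly. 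Finally set
\[
\mathcal{R} = \{\vec{v}\} \cup \bigcup_{\substack{1 \leq i \leq k+1 \\ 0 \leq m < v_i}} \mathcal{R}_{i,m}.
\]
This is a finite union of finite sets, so $\mathcal{R}$ is finite. For any $\vec{a} \in \mathcal{Y}$ either $v_i \leq a_i$ for all $i$ (and then $\vec{v} \in \mathcal{R}$ roots $\vec{a}$) or some $a_i < v_i$ (and then $\vec{a} \in F_{i,a_i}$, so it is rooted by some element of $\mathcal{R}_{i,a_i} \subseteq \mathcal{R}$), completing the induction.

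The only delicate step is verifying that reinserting the deleted coordinate correctly transfers the root property from $F'_{i,m}$ back to $F_{i,m}$; but since we only ever delete and reinsert a coordinate whose value is already fixed to $m$ across the entire class $F_{i,m}$, the $i$-th coordinate comparison is an equality and the remaining comparisons are exactly those certified by the inductive hypothesis. I do not anticipate any deeper obstacle beyond keeping the bookkeeping of indices clean; the corollary about multiset root sets then follows immediately by the equivalence already established in the lemma relating $\mathcal{X}$ and $\mathcal{Y}$.
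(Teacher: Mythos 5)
Your proof is correct and follows essentially the same argument as the paper's: strong induction on the dimension, fixing a reference vector $\vec{v}$, partitioning the non-dominating vectors into the classes $F_{i,m}$ with $0 \leq m < v_i$, projecting out the $i$-th coordinate to apply the inductive hypothesis, and reinserting $m$ to assemble the finite root set $\mathcal{R} = \{\vec{v}\} \cup \bigcup_{i,m} \mathcal{R}_{i,m}$. The only additions (handling $\mathcal{Y} = \emptyset$ and noting the count $\sum_i v_i$ of pairs) are harmless refinements of the same construction.
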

\begin{proof}
    This proof will be done by strong induction on the dimension of the vectors in $\mathcal{Y}$, which we will denote $k$. For $k = 1$ (i.e. the vectors have only one component), $\mathcal{Y}$ is just a set of nonnegative integers, in which clearly the singleton subset containing the smallest element is the root set. 
    
    Now assume the hypothesis holds for all $k' \leq k$, and let $\mathcal{Y}$ now have vectors of dimension $k + 1$ (where the components are still nonnegative). We will say a vector $\vec{v} = (v_1, \hdots, v_{k+1})$ \textit{dominates} vector $\vec{u} = (u_1, \hdots, u_{k+1})$ if for all $i$, $v_i \geq u_i$. Fix some $\vec{v} = (v_1, \hdots, v_{k+1}) \in \mathcal{Y}$ for the rest of the proof. If we include $\vec{v}$ in our root set, any vector in $\mathcal{Y}$ that dominates $\vec{v}$ will have $\vec{v}$ as its vector root by definition. Therefore we restrict our attention to only those vectors $\vec{u} = (u_1, \hdots, u_{k+1}) \in \mathcal{Y}$ such that for some $i$, $u_i < v_i$.
    
    Given $i$ and $m$ such that $1 \leq i \leq k + 1$ and $0 \leq m < v_i$, define the set $$F_{i,m} = \{ \vec{u} = (u_1, \hdots, u_{k+1}) \in \mathcal{Y} \mid u_i = m\}.$$ We define a corresponding set $F'_{i,m}$ that is identical to $F_{i,m}$ except it ignores the $i^{th}$ component of every vector in $F_{i,m}$; since this reduces the dimension of the vectors to $k$, we can get a finite vector root set $\mathcal{R'}_{i,m}$ by our inductive hypothesis. Given that the $i^{th}$ component of the vectors in $F_{i,m}$ are all value $m$, we can add back the $i^{th}$ component to each $\vec{r'} \in \mathcal{R}'_{i,m}$ to get a vector root set, $\mathcal{R}_{i,m}$, for $F_{i,m}$.
    
    More formally, for any $\vec{u} = (u_1, \hdots, u_{k+1}) \in F_{i,m}$, consider the corresponding vector $\vec{u'} = (u_1, \hdots, u_{i-1}, u_{i+1}, \hdots, u_{k+1}) \in F'_{i,m}$. This has a root $\vec{r'} = (r_1, \hdots, r_{i-1}, r_{i+1}, \hdots, r_{k+1}) \in \mathcal{R'}_{i,m}$, so that $\vec{r} = (r_1, \hdots, r_{i-1}, m, r_{i+1}, \hdots, r_{k+1}) \in \mathcal{R}_{i,m}$. Since $u_j \geq r_j$ for $j \neq i$ (by definition of vector root set $\mathcal{R'}_{i,m}$) and $u_i = m = r_i$, we have that $u_j \geq r_j$ for all $j$. Thus $\mathcal{R}_{i,m}$ is a finite vector root set for $F_{i,m}$.
    
    This procedure can be taken for $1 \leq i \leq k + 1$ and $0 \leq m < v_i$, and we can generate the \textit{finite} set:
        $$\mathcal{R} = \{\vec{v}\} \cup \bigcup_{1 \leq i \leq k + 1, 0 \leq m < v_i} \mathcal{R}_{i,m}.$$
    Thus any vector $\vec{u} \in \mathcal{Y}$ that dominates $\vec{v}$ would have $\vec{v}$ as a vector root; otherwise there is an $i$ such that $u_i < v_i$, and hence it would have a vector root in $\mathcal{R}_{i, u_i}$. Thus, $\mathcal{R}$ is a finite vector root set.
\end{proof}
\begin{corollary*}
    Let $\mathcal{X}$ be a set of finite multisets over a finite alphabet. $\mathcal{X}$ has a finite root set. In other words, there exists a finite subset $\mathcal{R} \subseteq \mathcal{X}$ such that for any $A \in \mathcal{X}$, there exists $R \in \mathcal{R}$ such that $R \subseteq A$.
\end{corollary*}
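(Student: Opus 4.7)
The plan is to derive this corollary immediately from the two preceding results: the \emph{Finite Root Set Lemma} (which establishes that any set $\mathcal{Y} \subseteq \mathbb{N}^k$ admits a finite vector root set) together with the earlier equivalence lemma that relates root sets of multisets to vector root sets of their multiplicity vectors. Since both of those results have already been proved in the excerpt, only a short bridging argument is required.

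First I would set up the translation from multisets to vectors. Let $\Sigma = \{\sigma_1, \ldots, \sigma_k\}$ be the finite alphabet underlying $\mathcal{X}$, and define
$$\mathcal{Y} = \{(m_A(\sigma_1), \ldots, m_A(\sigma_k)) : A \in \mathcal{X}\} \subseteq \mathbb{N}^k.$$
This is a set of $k$-dimensional vectors with nonnegative integer components, so the Finite Root Set Lemma applies and produces a finite vector root set $\mathcal{R}_\mathcal{Y} \subseteq \mathcal{Y}$.

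Next I would invoke the equivalence lemma (stated and proved earlier as Lemma~L:VectorRoot), which asserts that $\mathcal{X}$ has a finite root set if and only if $\mathcal{Y}$ has a finite vector root set. Applying the reverse direction of that lemma to $\mathcal{R}_\mathcal{Y}$ yields the desired finite $\mathcal{R}_\mathcal{X} \subseteq \mathcal{X}$; concretely, $\mathcal{R}_\mathcal{X} = \{R \in \mathcal{X} : (m_R(\sigma_1), \ldots, m_R(\sigma_k)) \in \mathcal{R}_\mathcal{Y}\}$ works, and for any $A \in \mathcal{X}$ its associated vector has some vector root $\vec{r}$ in $\mathcal{R}_\mathcal{Y}$, which corresponds to a multiset root $R \subseteq A$ in $\mathcal{R}_\mathcal{X}$.

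There is essentially no obstacle here, since the two real pieces of work---the inductive construction of a finite vector root set via the fixed ``shadow'' vector $\vec{v}$ and the componentwise translation between multisets and multiplicity vectors---were already carried out. The only care needed is to verify that the multiset inclusion $R \subseteq A$ matches the pointwise vector order, i.e.\ $R \subseteq A \iff m_R(\sigma_i) \leq m_A(\sigma_i)$ for all $i$, which is the defining property of multiset inclusion and was implicitly used in Lemma~L:VectorRoot.
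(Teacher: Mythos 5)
Your proposal is correct and follows essentially the same route as the paper's self-contained argument: translate $\mathcal{X}$ to the set $\mathcal{Y}$ of multiplicity vectors, apply the Finite Root Set Lemma to obtain a finite vector root set, and transfer it back via the equivalence lemma relating multiset inclusion to the pointwise order. Nothing is missing; the bridging step you describe is exactly how the paper derives the corollary.
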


\subsection{Root Set Corollaries}
\subsubsection{Uniqueness of Minimal Finite Root Set} \label{a:uniquerootset}
\begin{corollary} 
    Let $\mathcal{X}$ be a set of finite multisets over a finite alphabet. The minimal length root set $\mathcal{R}$ of $\mathcal{X}$ is unique.
\end{corollary}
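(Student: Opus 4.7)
The plan is to identify the minimum-cardinality root set explicitly with $\mathcal{M}$, the set of all $\subseteq$-minimal elements of $\mathcal{X}$. Since $\mathcal{M}$ is uniquely defined by $\mathcal{X}$, it suffices to show that every minimum-cardinality root set of $\mathcal{X}$ equals $\mathcal{M}$. This reduces uniqueness to two claims: (i) $\mathcal{M}$ is itself a root set, and (ii) every root set of $\mathcal{X}$ contains $\mathcal{M}$.

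To prove (i), I would fix any $A \in \mathcal{X}$ and consider the nonempty subfamily $\mathcal{X}_A = \{B \in \mathcal{X} : B \subseteq A\}$. Because the size $|B| = \sum_{\sigma \in \Sigma} m_B(\sigma)$ is a nonnegative integer, the relation $\subseteq$ on $\mathcal{X}_A$ is well-founded, so $\mathcal{X}_A$ contains some $\subseteq$-minimal element $M$. A short argument then shows that $M$ is minimal in all of $\mathcal{X}$: any $B \in \mathcal{X}$ with $B \subsetneq M$ would satisfy $B \subseteq A$ and contradict the minimality of $M$ inside $\mathcal{X}_A$. Hence $M \in \mathcal{M}$ and $M \subseteq A$, which establishes that $\mathcal{M}$ is a root set. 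For (ii), let $\mathcal{R}$ be any root set and let $M \in \mathcal{M}$; there exists $R \in \mathcal{R}$ with $R \subseteq M$, and since $R \in \mathcal{X}$ the $\subseteq$-minimality of $M$ in $\mathcal{X}$ forces $R = M$, so $M \in \mathcal{R}$ and therefore $\mathcal{M} \subseteq \mathcal{R}$.

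Combining (i) and (ii) finishes the proof. From (ii), any root set $\mathcal{R}$ satisfies $|\mathcal{R}| \geq |\mathcal{M}|$, while (i) shows $\mathcal{M}$ itself is a root set of cardinality $|\mathcal{M}|$, so the minimum cardinality is exactly $|\mathcal{M}|$. A minimum-cardinality root set $\mathcal{R}$ therefore satisfies $\mathcal{M} \subseteq \mathcal{R}$ and $|\mathcal{R}| = |\mathcal{M}|$, forcing $\mathcal{R} = \mathcal{M}$. The main subtlety, rather than a genuine obstacle, is verifying that $\mathcal{X}_A$ really contains a $\subseteq$-minimal element; this follows from the well-foundedness of multiset inclusion on finite multisets, and it can alternatively be extracted directly from Dickson's Lemma applied to $\mathcal{X}_A$ viewed as a subset of $\mathbb{N}^{|\Sigma|}$. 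Finiteness of $\mathcal{M}$ comes for free, since the Finite Root Set Lemma guarantees at least one finite root set, and $\mathcal{M}$ embeds into it by (ii).
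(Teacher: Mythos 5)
Your proof is correct, but it takes a different route from the paper's. The paper argues by contradiction: it supposes two minimum-size root sets $\mathcal{R}$ and $\mathcal{R}'$, picks $R_i \in \mathcal{R} \setminus \mathcal{R}'$, and chains roots to get $R_j \subseteq R' \subseteq R_i$ with $R_j, R_i \in \mathcal{R}$, $R' \in \mathcal{R}'$; either $R_i = R_j$ (forcing $R' = R_i \in \mathcal{R}'$, a contradiction) or $\mathcal{R} - \{R_i\}$ is a smaller root set, contradicting minimality. You instead identify the minimum root set explicitly as $\mathcal{M}$, the set of all $\subseteq$-minimal elements of $\mathcal{X}$, showing (i) $\mathcal{M}$ is a root set via well-foundedness of multiset inclusion (indeed $\{B \in \mathcal{X} : B \subseteq A\}$ is even finite, so a minimal element exists), and (ii) every root set contains $\mathcal{M}$, so the minimum cardinality is $|\mathcal{M}|$ and any root set achieving it must equal $\mathcal{M}$. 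Both arguments are sound; the paper's is shorter and purely local, while yours is more informative: it exhibits the minimal root set concretely as the set of minimal elements (dovetailing with the paper's invocation of Dickson's Lemma, which is precisely a finiteness statement about those minimal elements), shows the stronger structural fact that this set is contained in \emph{every} root set (not merely that minimum-cardinality root sets coincide), and gets finiteness of $\mathcal{M}$ for free from any finite root set.
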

\begin{proof}
    Suppose there exists two root sets of minimal size, $\mathcal{R}$ and $\mathcal{R}'$. Then there is some root $R_i \in \mathcal{R}$ such that $R_i \not \in \mathcal{R}'$. Since $R_i \in \mathcal{X}$, it must have a root $R' \in \mathcal{R}'$ such that $R' \subseteq R_i$. Since $R' \in \mathcal{X}$, it must have a root $R_j \in \mathcal{R}$ such that $R_j \subseteq R'$. Therefore we have that $R_j \subseteq R' \subseteq R_i$, for some $R_i, R_j \in \mathcal{R}$. If $R_i = R_j$ then we have that $R_i \subseteq R' \subseteq R_i$, which is a contradiction since this means $R' = R_i \in \mathcal{R}'$. If $R_i \neq R_j$ then $\mathcal{R}$ is not of minimal size since $\mathcal{R} - \{R_i\}$ is also a root set of $\mathcal{X}$. Thus any root in $\mathcal{R}$ is also in $\mathcal{R}'$; since $|\mathcal{R}| = |\mathcal{R}'|$ by assumption, the sets must be equal.
\end{proof}

\subsubsection{The Minimal Root Set is a Strong Downwards Antichain} \label{a:strongdownwardsantichain}
\begin{corollary} 
    Let $\mathcal{X}$ be a set of finite multisets over a finite alphabet. The minimal length root set $\mathcal{R}$ of $\mathcal{X}$ is a strong downwards antichain with respect to the subset partial ordering.
\end{corollary}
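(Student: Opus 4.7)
The plan is to verify the two defining conditions of a strong downwards antichain for $\mathcal{R}$ inside the poset $(\mathcal{X}, \subseteq)$ separately: first that $\mathcal{R}$ is an antichain, and second that no element of $\mathcal{X}$ lies at or below two distinct members of $\mathcal{R}$. Both pieces will be obtained by contradiction, leveraging the minimality of $|\mathcal{R}|$ already established in the previous corollary.

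For the antichain property, suppose for contradiction that distinct $R_i, R_j \in \mathcal{R}$ satisfy $R_i \subseteq R_j$. I would then observe that for any $A \in \mathcal{X}$ with $R_j \subseteq A$, transitivity of $\subseteq$ immediately gives $R_i \subseteq A$, so $R_i$ already witnesses the root-set property for $A$. Consequently $\mathcal{R} \setminus \{R_j\}$ remains a root set of $\mathcal{X}$, contradicting the minimality of $|\mathcal{R}|$. Hence $\mathcal{R}$ has no comparable pairs.

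For the strong condition, I would use a ``root of a root'' argument. Suppose distinct $R_i, R_j \in \mathcal{R}$ had a common lower bound $S \in \mathcal{X}$ with $S \subseteq R_i$ and $S \subseteq R_j$. Since $S \in \mathcal{X}$, the definition of a root set yields some $R_k \in \mathcal{R}$ with $R_k \subseteq S$, whence $R_k \subseteq R_i$ and $R_k \subseteq R_j$ by transitivity. Applying the antichain property just proved forces $R_k = R_i$ and $R_k = R_j$, so $R_i = R_j$, contradicting their distinctness.

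There is no serious obstacle here; the only subtle point is recognizing that the strong condition collapses into the antichain condition once one invokes the fact that the hypothetical common lower bound $S$, being itself in $\mathcal{X}$, must carry its own root. The entire argument uses only transitivity of multiset inclusion together with the minimality (and, implicitly, the uniqueness) of $\mathcal{R}$ established in the preceding corollaries.
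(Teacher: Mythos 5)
Your proof is correct, and the first half (the antichain property via discarding $R_j$ from $\mathcal{R}$ and invoking minimality) coincides with the paper's argument. For the strong condition, however, you take a genuinely different route: the paper argues directly from minimality a second time, observing that a common lower bound $A \in \mathcal{X}$ of distinct roots $R_i, R_j$ would make $(\mathcal{R} \cup \{A\}) \setminus \{R_i, R_j\}$ a strictly smaller root set, a contradiction; you instead apply the root-set property to the common lower bound $S$ itself, obtain a root $R_k \subseteq S \subseteq R_i, R_j$, and then let the already-established antichain property force $R_k = R_i = R_j$. Your version buys a slightly more general fact -- any root set that happens to be an antichain is automatically a strong downwards antichain, with no further appeal to minimality -- while the paper's version is more uniform, since both halves are ``shrink the root set'' arguments of the same shape. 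One small remark: the uniqueness of the minimal root set, which you say is used ``implicitly,'' is in fact not needed anywhere in your argument; minimality alone (for the antichain step) suffices.
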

\begin{proof}
     If there were $R_i, R_j \in \mathcal{R}$ such that $R_i \subseteq R_j$, then $\mathcal{R} - \{R_j\}$ would be a smaller root set, contradicting its minimal length. Thus due to its minimality, $\mathcal{R}$ is an antichain. Suppose there were two distinct roots $R_i, R_j \in \mathcal{R}$ such that some element $A \in \mathcal{X}$ is a subset of both $R_i$ and $R_j$. Then $(\mathcal{R} \cup \{A\}) - \{R_i, R_j\}$ is a smaller root set, contradicting its minimal length. Thus it is a strong downwards antichain.
\end{proof}
 
}{}
 
\end{document}